\documentclass[11pt]{article}
\usepackage[margin=1in]{geometry}
\usepackage[final]{microtype}
\usepackage{epsfig}
\usepackage{graphics}
\usepackage{latexsym}
\usepackage{amsmath}
\usepackage{amsfonts}
\usepackage{amssymb}
\usepackage{mathrsfs}
\usepackage[dvipsnames]{xcolor}
\usepackage{amsthm}
\usepackage{xspace}
\usepackage{epstopdf}
\usepackage{float}
\usepackage{hyperref}
\usepackage{pgfplots}
\pgfplotsset{compat=1.18}
\usepackage{thmtools}
\usepackage{thm-restate}
\usepackage{longtable}

\numberwithin{equation}{section}
\usepackage[ruled,vlined]{algorithm2e}
\SetArgSty{textrm}

\usepackage[english]{babel}
\usepackage[nottoc]{tocbibind}

\usepackage{caption}
\usepackage{subcaption}

\usepackage{pifont}
\usepackage{booktabs}
\usepackage{multirow}

\theoremstyle{plain}     
\newtheorem{theorem}{Theorem}

\newtheorem{lemma}{Lemma}
\newtheorem{proposition}{Proposition}
\newtheorem{mechanism}{Mechanism}
\theoremstyle{definition} 
\newtheorem{definition}{Definition}

\theoremstyle{remark} 

\makeatletter
\@addtoreset{equation}{section}
\def\section{\@startsection {section}{1}{\z@}{-3.5ex plus -1ex minus
 -.2ex}{2.3ex plus .2ex}{\large\bf}}
\makeatother

\def\bfm#1{\mbox{\boldmath$#1$}}

\def\0{\bfm 0}

\DeclareMathAlphabet{\mathpzc}{OT1}{pzc}{m}{it}
\newcommand{\argmin}{\operatorname*{arg\,min}}

\newcounter{my}

\newcounter{my2}

\newcounter{my3}

\newcounter{my4}

\newcounter{my5}

\newcounter{my6}

\allowdisplaybreaks 

\begin{document}

\title{Improved Metric Distortion Bounds for Deterministic Weighted-Tournament Voting Rules}
\author{Hau Chan$^{1}$\quad Jianan Lin$^{2}$\quad Chenhao Wang $^{3,4}$\\[0.75em]
$1$ University of Nebraska-Lincoln\\
$2$ Rensselaer Polytechnic Institute\\
$3$ Beijing Normal University-Zhuhai\\
$4$ Beijing Normal-Hong Kong Baptist University
}
\date{}
\maketitle

\begin{abstract}
In metric social choice, voters and candidates lie in a common but unknown metric space, voters rank candidates by distance, and a voting rule seeks to minimize total distance to the voters.
Its distortion is the worst-case approximation ratio relative to the minimum possible total distance.
We study weighted-tournament rules (also known as C2 rules), which observe only the fraction of voters who prefer $a$ to $b$ for each pair of candidates $a,b$.
These frequencies form a weighted tournament on candidates, a compressed representation that omits voter identities and the association of comparisons with individual voters.
Prior work placed the optimal distortion of deterministic C2 rules between $3.1128$ and $3.9312$ [Charikar et al., EC 2025].
We introduce the Path-Unblanketed Set rule, a polynomial-time deterministic C2 rule with distortion at most $1+2\sqrt{2}\approx3.8284$ for every finite number of candidates.
For elections with no more than six candidates, we prove with computer assistance that the distortion is at most $3.3346$.
Furthermore, using an exact computer-assisted certificate, we provide a lower bound of $3.1828$ for deterministic C2 rules as a byproduct.
\end{abstract}

\section{Introduction}\label{sec:introduction}

Metric distortion studies single-winner elections in which voters and candidates lie in a common but unknown metric space \cite{procaccia2006distortion,anshelevich2021distortion}.
Each voter reports a strict ranking of the candidates that is consistent with distance, while the voting rule does not observe the distances themselves.
The social cost of a candidate is its total distance to the voters, and the objective is to select a candidate of minimum social cost.
The distortion of a rule is the worst-case ratio between the social cost of its output and the minimum possible social cost, taken over all preference profiles and all metrics consistent with those profiles.
Thus, a constant guarantee on the distortion must hold without knowing which consistent metric is the true one or which candidate is socially optimal \cite{anshelevich2018approximating,goel2017metric}.


We impose a further information restriction: the rule does not observe the full preference profile and instead retains only aggregate pairwise frequencies.
For every two distinct candidates $a,b$, let $w_{ab}$ be the fraction of voters who rank $a$ above $b$; in particular, $w_{ab}+w_{ba}=1$.
The matrix of these frequencies is a \emph{weighted tournament}.
It omits voter identities and the association of different pairwise comparisons with the same individual ranking.
A rule whose output depends only on this weighted tournament is called a \emph{weighted-tournament rule}, or a \emph{C2 rule} \cite{munagala2019improved,charikar2025metric}.
The weighted tournament is an anonymous representation whose size is quadratic in the number of candidates, regardless of the number of voters, and different preference profiles may induce exactly the same representation.
Consequently, a deterministic C2 rule must select the same candidate from any two profiles that induce the same weighted tournament.
We study how this additional aggregation affects the best distortion guarantee achievable by a deterministic rule.

The information restriction separates C2 rules from other metric-voting models.
Unrestricted deterministic ordinal rules have optimal distortion $3$ \cite{gkatzelis2020resolving,kizilkaya2022plurality}, and randomized C2 rules can also attain distortion $3$ \cite{charikar2024breaking}.
For deterministic C2 rules, Anshelevich et al.~\cite{anshelevich2018approximating} showed that the \emph{Copeland} rule has distortion $5$ and established a general lower bound of $3$ for deterministic ordinal rules. Later, Munagala and Wang~\cite{munagala2019improved} gave a rule with distortion at most $2+\sqrt{5}\approx4.2361$ through the weighted uncovered set. 
More recently, Charikar et al.~\cite{charikar2025metric} introduced the \emph{Unblanketed Set} rule, improved the upper bound to $3.9312$, and established a lower bound of $3.1128$ even with only five candidates.
The optimal distortion of deterministic C2 rules is therefore known only to lie between $3.1128$ and $3.9312$.

The Unblanketed Set rule selects a candidate $j$ that can be compared with every other candidate $i$ through a suitable auxiliary candidate $k$, called a \emph{proxy}. Such a candidate $j$ is called unblanketed.
The relevant pairwise-frequency inequalities then bound the social cost of $j$ relative to that of $i$, including when $i$ is socially optimal under an unknown consistent metric.
In the Unblanketed Set rule, however, these inequalities must come from a short local configuration: either a direct relation between $j$ and $k$, or one involving a single additional candidate.
This restriction can miss useful information spread across a longer sequence of candidates.
Our goal is therefore to turn these local comparisons into a global comparison by following paths in auxiliary graphs derived from the weighted tournament.



\subsection{Our Contributions}

To overcome the limitation of short local comparisons, we introduce the \emph{Path-Unblanketed Set rule}, a deterministic C2 rule with a parameter $\lambda\ge 1$.
The rule uses only the weighted tournament and runs in polynomial time.
For any finite number of candidates, setting $\lambda=\sqrt{2}$ gives distortion at most
$
    1+2\sqrt{2}
    \approx 3.8284$.
This improves the previous upper bound of $3.9312$ for deterministic C2 rules \cite{charikar2025metric}.

The rule extends the proxy comparison used by the Unblanketed Set.
Fix a candidate $j$ being tested and another candidate $i$; in the distortion analysis, $i$ may be socially optimal.
The rule looks for a proxy $k\ne j$ that is \emph{admissible}, meaning that either $k=i$ or $w_{ij}\le\lambda w_{ki}$.
For each choice of $j$ and $k$, it builds a directed graph from the pairwise frequencies, placing an arc $a\to b$ whenever $w_{bj}\le\lambda w_{ak}$.
Candidate $j$ passes the comparison with $i$ if it can reach an admissible proxy $k$ along a path in this graph.

The path allows the rule to use information from any number of intermediate candidates without weakening the distortion bound.
To see why, consider any way of dividing the candidates into two groups, with $j$ in one group and $k$ in the other.
Every path from $j$ to $k$ must contain an arc that crosses between the two groups, and that arc provides the pairwise-frequency inequality needed for the metric comparison.
Different divisions may use different arcs, but the analysis never applies the metric bound separately along every arc of the path.
The approximation loss is therefore incurred only once, regardless of the path length.
When $i$ is socially optimal, this shows that the social cost of $j$ is at most $1+2\lambda$ times that of $i$.

The main technical challenge is to prove that some candidate passes the comparison with every possible $i$.
The auxiliary graph depends on both $j$ and its proxy $k$, so this claim cannot be obtained by choosing a maximal vertex of one fixed graph.
We instead assume that every candidate fails some comparison and show that these failures contain a directed cycle.
Along this cycle, the inability to reach admissible proxies divides candidates into reachable and unreachable groups and produces strict inequalities between their pairwise frequencies.
Using a new combinatorial expansion lemma to track how these groups change around the cycle, we derive a contradiction when $\lambda=\sqrt{2}$.
This contradiction proves that a suitable candidate always exists and completes the upper-bound analysis in Section~\ref{sec:upper-bound}.

We obtain a stronger result for elections with five or six candidates in Section~\ref{sec:local-bounds}.
Let $\rho>1$ be the real root of $\rho^5=\rho+1$.
An exact computer-assisted analysis shows that the Path-Unblanketed Set rule with $\lambda=\rho$ has distortion at most
$1+2\rho\approx 3.3346$.
For the lower bound, in Appendix \ref{app:exact-lower-bound}, we follow the approach of Charikar et al.~\cite{charikar2025metric} and give an exact computer-assisted proof based on a 17-candidate weighted tournament, raising the general lower bound from $3.1128$ to $3.1828$.

\subsection{Additional Related Work}

Procaccia and Rosenschein~\cite{procaccia2006distortion} introduced distortion to quantify the welfare loss that results when a voting rule observes only ordinal rather than cardinal preferences, and Boutilier et al.~\cite{boutilier2015optimal} further developed this utilitarian perspective.
Anshelevich et al.~\cite{anshelevich2018approximating} introduced the metric model, in which voters' rankings are induced by an unknown metric and the objective is to minimize total distance to the voters.
Subsequent work studied lower bounds, fairness properties, and the performance of common voting rules \cite{goel2017metric,skowron2017social}; see the survey of Anshelevich et al.~\cite{anshelevich2021distortion}.

For deterministic rules that observe every voter's full ranking, the optimal distortion is $3$ \cite{gkatzelis2020resolving,kizilkaya2022plurality}.
For randomized rules with the same information, the best known lower bound is $2.1126$ \cite{charikar2022metric}, while the best known upper bound is below $2.753$ \cite{charikar2024breaking}.
Cai et al.~\cite{cai2026distortion} further showed that distortion below $3$ is possible even when the rule randomizes uniformly over a list of constant size.
For randomized weighted-tournament rules, the optimal distortion is $3$ \cite{goel2017metric,charikar2024breaking}.
If only the direction of each pairwise majority is retained, the optimal distortion for randomized rules increases to $4$ \cite{frank2025metric}.

For deterministic weighted-tournament rules, Munagala and Wang~\cite{munagala2019improved} obtained the upper bound $2+\sqrt{5}\approx4.2361$ using a weighted version of the uncovered set.
Charikar et al.~\cite{charikar2025metric} subsequently improved the upper bound to $3.9312$ with the Unblanketed Set and proved a lower bound of $3.1128$.
These rules draw on classical tournament ideas such as the uncovered set and covering relations \cite{miller1980new,moulin1986choosing,dutta1988covering,laslier1997tournament,brandt2016tournament}.
Related variants study limited communication \cite{kempe2020communication} or allow the rule to return a committee rather than a single winner \cite{banihashem2026bi}.

\section{Preliminaries}\label{sec:preliminaries}


Let $V$ be a set of voters, and let $C$ be a set of $m$ candidates.
Each voter $v\in V$ reports a strict total order over $C$.
We write $a\succ_v b$ when voter $v$ prefers candidate $a$ to candidate $b$, and denote the resulting preference profile by $\sigma=(\succ_v)_{v\in V}$.
A deterministic social choice rule selects a winning candidate based on the preference information available to it. 


The \emph{weighted tournament} induced by $\sigma$ is the matrix $W(\sigma)=(w_{ab})_{a,b\in C}$, where $w_{ab}=\frac{|\{v\in V:a\succ_v b\}|}{|V|}$ is the fraction of voters who prefer candidate $a$ to $b$; thus, the complementarity identity $w_{ab}+w_{ba}=1$ holds.
We use the convention $w_{aa}=\frac{1}{2}$.
A deterministic \emph{weighted-tournament rule}, also called a deterministic \emph{C2 rule}, maps $W(\sigma)$ to a winning candidate and has no access to the underlying profile beyond these pairwise frequencies \cite{munagala2019improved,charikar2025metric}.
Consequently, if two preference profiles induce the same weighted tournament, a C2 rule must select the same winner for both profiles.

\paragraph{Metric Distortion}
We use the standard metric distortion model.
The voters and candidates are embedded in a common pseudo-metric space $(X,d)$, where $X=V\cup C$.
Thus, $d(x,x)=0$, $d(x,y)=d(y,x)$, and $d(x,z)\le d(x,y)+d(y,z)$ for all $x,y,z\in X$.
The pseudo-metric $d$ is \emph{consistent} with a preference profile $\sigma$ if $a\succ_v b$ implies $d(v,a)\le d(v,b)$ for every voter $v\in V$ and every pair $a,b\in C$.
This weak inequality allows distance ties to be broken consistently with the reported strict order.
Let $\mathcal D(\sigma)$ denote the set of pseudo-metrics consistent with $\sigma$.

For a candidate $c\in C$, its \emph{social cost} under $d$ is $\mathrm{SC}(c,d)=\sum_{v\in V}d(v,c)$.
A candidate $i$ is socially optimal under $d$ if
$i\in\argmin_{c\in C}\mathrm{SC}(c,d)$.
The metric $d$ is used to evaluate the selected candidate, but it is not given to the C2 rule.
Therefore, when the rule observes $W(\sigma)$, it does not know which consistent metric is the true one or which candidate $i^*$ minimizes the social cost.
The \emph{distortion} of a deterministic C2 rule $f$ is
\begin{align}
    \operatorname{dist}(f)
    &=
    \sup_{\sigma}\;
    \sup_{\substack{d\in\mathcal D(\sigma):\\
                     \min_{c\in C}\mathrm{SC}(c,d)>0}}
    \frac{\mathrm{SC}\bigl(f(W(\sigma)),d\bigr)}
         {\min_{c\in C}\mathrm{SC}(c,d)}.
    \label{eq:distortion}
\end{align}
For a fixed number $m$ of candidates, let $D_m^{\mathrm{C2}}$ denote the infimum distortion over all deterministic C2 rules on $m$ candidates.
When a rule is defined for arbitrary numbers of candidates, its overall distortion is the worst-case distortion over all integers $m\ge 2$.
Thus, a general distortion bound must hold uniformly for every finite number of candidates.

\section{Upper Bound for Deterministic C2 Rules}\label{sec:upper-bound}

We begin by recalling the Unblanketed Set rule of Charikar et
al.~\cite{charikar2025metric}.  Fix parameters
$\alpha\ge\beta>\frac12$.  A candidate $j$ is
\emph{$(\alpha,\beta)$-unblanketed} if, for every $i\ne j$, there is a
candidate $k$ such that either $k=i$ or $w_{ki}\ge\alpha$, and at least
one of the following conditions holds:
\begin{enumerate}
    \item $w_{kj}\le\beta$;
    \item $w_{kj}\le\alpha$ and there is a candidate
    $\ell\notin\{k,j\}$ such that $w_{k\ell}\le\beta\le w_{j\ell}$.
\end{enumerate}

\begin{mechanism}[Unblanketed Set Rule \cite{charikar2025metric}]\label{mech:unblanketed}
Given a weighted tournament and parameters $\alpha\ge\beta>\frac12$,
select an $(\alpha,\beta)$-unblanketed candidate using fixed tie-breaking.
\end{mechanism}

Charikar et al. prove that an $(\alpha,\beta)$-unblanketed candidate
always exists and, with their choice of parameters, that the rule has
distortion at most $3.9312$.  The two conditions above are local: the
first uses only $j$ and $k$, while the second uses one additional
candidate $\ell$.  They do not use a longer sequence of intermediate
candidates.  Our rule replaces these local conditions by reachability in
a directed graph.  This allows an arbitrarily long path, while every step
of the path is still measured relative to the same two fixed candidates
$j$ and $k$.

Throughout this section, $i$ typically denotes a candidate assumed to be
socially optimal, $j$ the candidate selected by the rule, and $k$ a proxy for $i$.

\subsection{The Path-Unblanketed Set Rule}

Fix a parameter $\lambda\ge1$.  To compare a candidate $j$ with a possible
socially optimal candidate $i$, we follow Charikar et al. and use another
candidate $k$ as a \emph{proxy} for $i$.  When $k\ne i$, we require
$w_{ij}\le\lambda w_{ki}$, or equivalently
$w_{ki}\ge w_{ij}/\lambda$.  Thus, the more strongly $i$ defeats $j$, the more
support the proxy $k$ must receive against $i$.

For distinct candidates $j,k\in C$, define the directed graph
$G_\lambda(j,k)$ on vertex set $C$ by including an arc $a\to b$ if and
only if
\[
    w_{bj}\le\lambda w_{ak}.
\]
Thus, an arc $a\to b$ compares the next candidate $b$ with $j$ and the
current candidate $a$ with $k$: the fraction preferring $b$ to $j$ is at
most $\lambda$ times the fraction preferring $a$ to $k$.

A directed path from $j$ to $k$ in this graph has the form
\[
    j=z_0\to z_1\to\cdots\to z_h=k,
\]
where every step satisfies
\[
    w_{z_{t+1},j}\le\lambda w_{z_t,k}
    \qquad (t=0,\ldots,h-1).
\]
Intuitively, the path provides a chain of possible bridges from $j$ to the
proxy $k$.  Consider any division of the candidates into two groups, with
$j$ in one group and $k$ in the other.  The path must cross from the
$j$-side to the $k$-side somewhere.  If the crossing arc is
$z_t\to z_{t+1}$, its defining inequality says that the support for the
candidate just across the boundary over $j$ is controlled by the support
for the candidate just before the boundary over $k$.  Thus, the path
provides a useful comparison for every possible division separating $j$
from $k$, although different divisions may use different arcs of the path.
This motivates us to give the following definition and mechanism. 

\begin{definition}[Admissibility and path-unblanketed candidates]
For distinct candidates $i,j\in C$, a candidate $k$ is
\emph{$\lambda$-admissible for $(i,j)$} if $k\ne j$ and either $k=i$ or
$w_{ij}\le\lambda w_{ki}$.  A candidate $j$ is
\emph{$\lambda$-path-unblanketed} if, for every
$i\in C\setminus\{j\}$, there is a candidate $k$ that is
$\lambda$-admissible for $(i,j)$ and reachable from $j$ in
$G_\lambda(j,k)$.
\end{definition}

The special case $k=i$ makes $i$ automatically $\lambda$-admissible for
$(i,j)$.  In addition, if $w_{ij}\le\lambda w_{ji}$, then
$G_\lambda(j,i)$ contains the direct arc $j\to i$.  More generally, a
$\lambda$-path-unblanketed candidate has, for every $i\ne j$, a path to
some admissible endpoint in the corresponding graph. Our mechanism selects such a candidate. 

\begin{mechanism}[$\lambda$-Path-Unblanketed Set Rule]\label{mech:path-unblanketed}
Given a weighted tournament and a parameter $\lambda\ge1$, select a
$\lambda$-path-unblanketed candidate using fixed tie-breaking.
\end{mechanism}

The rule can be implemented directly from the definition.  Inspect the
candidates $j$ in the fixed tie-breaking order.  For every $i\ne j$,
enumerate the possible candidates $k$, check admissibility, construct
$G_\lambda(j,k)$, and run a directed-reachability search from $j$ to $k$.
Select the first $j$ for which a suitable $k$ is found for every $i\ne j$.
There are polynomially many graphs and each reachability test takes
polynomial time.  The rule considers every $i$ because it does not know
which candidate is socially optimal.

It remains to show that the set from which the rule selects is nonempty. The following theorem guarantees the non-emptiness when $\lambda=\sqrt{2}$.

\begin{theorem}\label{thm:path-existence}
Every weighted tournament contains a $\sqrt{2}$-path-unblanketed candidate.
\end{theorem}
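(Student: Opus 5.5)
We argue by contradiction, following the outline in the introduction. Suppose no candidate is $\sqrt{2}$-path-unblanketed, and write $\lambda=\sqrt2$. Then for every $j$ there is a \emph{witness} $i=\pi(j)\ne j$ such that no $\lambda$-admissible $k$ for $(i,j)$ is reachable from $j$ in $G_\lambda(j,k)$. Since $k=i$ is always admissible, $i$ is not reachable from $j$ in $G_\lambda(j,i)$; in particular the direct arc is missing, so $w_{ij}>\lambda w_{ji}$, i.e.\ $w_{ij}>\lambda/(1+\lambda)$. The map $\pi$ has no fixed points on the finite set $C$, so iterating it yields a directed cycle $j_0\to j_1\to\cdots\to j_{r-1}\to j_0$ with $j_{t+1}=\pi(j_t)$ and $r\ge2$. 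The case $r=2$ is immediate, since $w_{j_1j_0}>1/2$ and $w_{j_0j_1}>1/2$ contradict complementarity. So the real work is $r\ge3$.

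For each $t$, let $R_t(k)$ be the set of vertices reachable from $j_t$ in $G_\lambda(j_t,k)$. By the definition of arcs, $R_t(k)$ is closed in the following sense: if $a\in R_t(k)$ and $b\notin R_t(k)$, then
\[
w_{b j_t}>\lambda\, w_{a k}.
\]
Taking $a=j_t$ gives $w_{bj_t}>\lambda w_{j_tk}$ for every unreachable $b$. Taking $b=k$ gives $w_{kj_t}>\lambda\max_{a\in R_t(k)}w_{ak}$. We apply this with $k=j_{t+1}$, and also with every $k$ satisfying $w_{j_{t+1}j_t}\le\lambda w_{k j_{t+1}}$. Each such $k$ is admissible, so it is unreachable, which yields a strict inequality for every pair (reachable, unreachable). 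The result is a family of \emph{cuts} $(R_t,\overline{R_t})$ with $j_t\in R_t$, $j_{t+1}\notin R_t$, along with strict inequalities linking $w$-values across each cut.

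Next I would prove the combinatorial expansion lemma: I would track how $R_t$ relates to $R_{t+1}$ around the cycle. The intended statement is the following. If $j_{t+2}$ (or some candidate $k$ strongly beating $j_{t+1}$) lies on the reachable side of $j_t$'s cut, then the chain of inequalities gives $w_{j_{t+2}j_{t+1}}$ large while $w_{j_{t+1}j_t}$ is also large, and these two combine to force a quantitative growth. Concretely, set $x_t=w_{j_{t+1}j_t}$. I would aim for the recursion $1-x_{t+1}<(1-x_t)/\lambda$ or $x_{t+1}>\lambda x_t-c$, derived from the closure inequalities together with $w_{ab}+w_{ba}=1$. With $\lambda=\sqrt2$ the constants are chosen so that $\lambda^2=2$ makes a two-step composition strictly contracting, say $1-x_{t+2}<(1-x_t)/2$ plus a bounded error. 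Going once around the cycle then returns $x_r=x_0$ with a strict improvement, which is a contradiction.

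The main obstacle is this middle step. The graphs $G_\lambda(j_t,k)$ change with both $t$ and $k$, so the reachable sets for consecutive cycle vertices are not nested, and one must carefully choose which unreachable vertex (some $j_{t+2}$, or the best admissible proxy) supplies the inequality. My first attempt would be a case split on whether $j_{t+2}\in R_t(j_{t+1})$:
\begin{itemize}
\item if it is, use the arc condition at the crossing edge on a $j_t$-to-$j_{t+2}$ path;
\item if it is not, use $w_{j_{t+2}j_t}>\lambda w_{j_tj_{t+1}}$ directly.
\end{itemize}
I would then check that both branches feed into one potential $\Phi_t$ that strictly increases around the cycle exactly when $\lambda\ge\sqrt2$.
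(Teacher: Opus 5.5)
Your setup is sound and matches the paper's opening: the witness map, the cycle of bad pairs, $w_{ij}>\lambda w_{ji}$ on each bad pair, the $r=2$ case, and the closure inequalities $w_{bj_t}>\lambda w_{ak}$ across each reachability cut. But the proof stops where the theorem's real content begins. The step you flag as the main obstacle is the entire argument. The recursion $1-x_{t+1}<(1-x_t)/\lambda$ is neither derived nor supported by the inequalities you list.

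\textbf{First missing ingredient: an upper bound on the cycle weights.} On the bad cycle, $x_t=w_{j_{t+1}j_t}$ is only known to exceed $\lambda/(1+\lambda)$. So the bound you get by taking $b=k$, namely $w_{a j_{t+1}}<x_t/\lambda$, is only $<1/\lambda$. Two such bounds on opposite orientations of a pair then sum to $2/\lambda=\sqrt2$, which does not contradict complementarity. The paper gets two-sided control by re-indexing. For each cycle vertex $k$, traverse the bad cycle starting from the edge leaving $k$, and let $F(k)$ be the head of the \emph{last} bad edge for which $k$ is still admissible. Admissibility there makes $k$ unreachable from $F(k)$ in $G_\lambda(F(k),k)$, giving $w_{k,F(k)}>\lambda/(1+\lambda)$. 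Non-admissibility on the following edge gives $w_{k,F(k)}<1/\lambda$. The rest of the argument runs on a cycle of $F$, not on the original bad cycle. Your use of ``every admissible $k$'' never isolates this choice.

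\textbf{Second missing ingredient: a way to relate cuts that live in different graphs.} Relabel the $F$-cycle so that $F(t)=t+1$, and let $R_t$ be the set of cycle vertices reachable from $t+1$ in $G_\lambda(t+1,t)$. The quantity $p_t=\max_{a\in R_t}w_{a,t}$ grows, $p_t>\lambda p_{t-1}$, only at indices where some vertex enters, i.e.\ $R_t\setminus R_{t-1}\neq\varnothing$. At the other indices the sets only shrink, and $p_t$, $p_{t-1}$ concern different columns of the tournament. So there is no per-step monotone potential of the kind you hope for.

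The paper instead proves a purely combinatorial lemma: any cyclic family with $t+1\in R_t$ and $t\notin R_t$ exhibits one of three configurations:
\begin{enumerate}
\item mutual containment;
\item two consecutive entrant indices;
\item a crossing gap, meaning $R_t\setminus R_{t-1}\neq\varnothing$, $b\notin R_t$ and $t+1\notin R_{b-1}$.
\end{enumerate}
Each configuration is then refuted at $\lambda=\sqrt2$ using the two-sided bounds. Consecutive entrants force $p_{t+1}>\lambda^2-\lambda=2-\sqrt2$, although $p_{t+1}<1/\lambda^2=\tfrac12$. Mutual containment forces a pair with $w_{st}+w_{ts}<2/\lambda^2=1$. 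A crossing gap forces a pair whose two weights sum to more than $\lambda^2-1=1$.

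Your case split on whether $j_{t+2}\in R_t(j_{t+1})$ reaches neither ingredient. As written, the proposal is a plan whose central step is missing.
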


Theorem~\ref{thm:path-existence}, together with Lemma~\ref{lem:path-certificate} below, gives the main result.

\begin{theorem}\label{thm:main-upper-bound}
The $\sqrt{2}$-Path-Unblanketed Set Rule is a polynomial-time deterministic
C2 rule with distortion at most $1+2\sqrt{2}\approx3.8284$.
\end{theorem}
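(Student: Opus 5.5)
Theorem~\ref{thm:main-upper-bound} splits into three parts: the rule is well defined, it is a polynomial-time C2 rule, and its distortion is at most $1+2\sqrt2$. Well-definedness is exactly Theorem~\ref{thm:path-existence}: with $\lambda=\sqrt2$ every weighted tournament has a $\sqrt2$-path-unblanketed candidate, so Mechanism~\ref{mech:path-unblanketed} always outputs something. The rule is a C2 rule because admissibility and the arcs of $G_\lambda(j,k)$ are defined only through the entries $w_{ab}$, and the tie-breaking order is fixed. For running time, the implementation described after the mechanism checks $O(m^3)$ triples $(j,i,k)$. For each triple it builds a graph on $m$ vertices with $O(m^2)$ arcs and runs one BFS, so the total time is $O(m^5)$.

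For the distortion, I will use Lemma~\ref{lem:path-certificate}, which I expect to say the following. If $k$ is $\lambda$-admissible for $(i,j)$ and reachable from $j$ in $G_\lambda(j,k)$, then $\mathrm{SC}(j,d)\le(1+2\lambda)\,\mathrm{SC}(i,d)$ for every consistent $d$. Now fix a profile $\sigma$ and a consistent $d$ with positive optimal cost. Let $j$ be the output and $i^*$ a social optimum. If $j=i^*$ the ratio is $1$. Otherwise, path-unblanketedness of $j$ applied to $i=i^*$ gives an admissible reachable $k$, and the lemma gives ratio at most $1+2\sqrt2$. Taking suprema over $\sigma$, $d$ and $m$ finishes the argument.

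If I had to prove Lemma~\ref{lem:path-certificate} myself, I would first sort the voters. I would use the threshold structure: define for each voter $v$ the quantity $d(v,j)-d(v,i)$ (or a normalized form of it), and for each threshold $r$ let $S_r$ be the set of candidates that voters "above $r$" prefer to $j$. I would then write the cost difference $\mathrm{SC}(j)-\mathrm{SC}(i)$ as an integral over $r$ of the mass of a sublevel set. For each $r$, the cut separating candidates close to $k$ from those not close is crossed by some arc $z_t\to z_{t+1}$ of the path. That arc gives $w_{z_{t+1}j}\le\lambda w_{z_tk}$, which bounds the relevant mass by $\lambda$ times a mass controlled through $k$. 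Admissibility, $w_{ij}\le\lambda w_{ki}$, then ties $k$ back to $i$.

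The main obstacle is getting exactly one factor of $\lambda$ per threshold rather than one per arc. The cut-crossing argument handles this, and the two applications of the triangle inequality (through $k$ and through $i$) produce the $1+2\lambda$. Since that lemma is stated in the paper, the theorem itself reduces to the assembly described above.
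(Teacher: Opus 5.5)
Your proof is correct and follows the paper's assembly. Well-definedness comes from Theorem~\ref{thm:path-existence}, and the distortion bound comes from applying Lemma~\ref{lem:path-certificate} to a social optimum $i\ne j$, with the case $i=j$ handled trivially. The only difference is cosmetic: the paper caches reachability for the $O(m^2)$ pairs $(j,k)$ rather than searching once per triple, which does not affect polynomiality.

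Your speculative sketch of Lemma~\ref{lem:path-certificate} is not needed. The paper proves that lemma by invoking Charikar et al.'s cut criterion: for every partition $I\sqcup J$ with $i,k\in I$ and $j\in J$, require $\min_{a\in I}w_{aj}\le\lambda\max_{b\in J}\{w_{bi},w_{bk}\}$. It then takes the first arc of the path entering $I$. That is the precise form of your one-factor-of-$\lambda$-per-cut intuition.
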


This bound improves the previous upper bound of $3.9312$ for deterministic C2
rules~\cite{charikar2025metric}. The guarantee is uniform over all
finite numbers of candidates.

\begin{lemma}\label{lem:path-certificate}
Fix $\lambda\ge 1$ and distinct candidates $i,j\in C$.
Suppose that $d\in\mathcal D(\sigma)$ is a consistent metric.
If some candidate $k$ is $\lambda$-admissible for $(i,j)$ and reachable from $j$ in $G_\lambda(j,k)$, then $\mathrm{SC}(j,d)\le(1+2\lambda)\mathrm{SC}(i,d)$.
\end{lemma}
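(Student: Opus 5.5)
The plan is to reduce the lemma to a lower bound on how many voters are far from $i$, and to prove that lower bound one distance threshold at a time, using a cut of the candidate set. Write $n=|V|$ and $D=d(i,j)$. A voter with $j\succ_v i$ has $d(v,j)\le d(v,i)$. A voter with $i\succ_v j$ has $d(v,j)\le d(v,i)+D$ by the triangle inequality. Summing gives
\[
\mathrm{SC}(j,d)\le \mathrm{SC}(i,d)+w_{ij}\,n\,D ,
\]
so it suffices to show $w_{ij}\,D\le 2\lambda\,\mathrm{SC}(i,d)/n$. By the layer-cake formula, $\mathrm{SC}(i,d)=n\int_0^\infty \Pr_v[d(v,i)\ge s]\,ds$. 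It is therefore enough to prove, for every $s\in(0,D/2)$, the threshold claim
\[
\Pr_v[d(v,i)\ge s]\ \ge\ w_{ij}/\lambda .
\]
Integrating over $(0,D/2)$ then gives exactly $\mathrm{SC}(i,d)\ge nDw_{ij}/(2\lambda)$, which is the required bound.

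To prove the threshold claim for a fixed $s$, I split the candidates by their distance to $i$. Let $N=\{c: d(c,i)<t\}$ and $F=C\setminus N$, with $t$ to be chosen in $[2s,\,D-2s]$ or nearby; the exact value has to be calibrated. Since $D>2s$, we can arrange $j\in F$. There are two cases.

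\emph{Case $k\in F$.} Then $k\ne i$, so admissibility gives $w_{ki}\ge w_{ij}/\lambda$. Every voter with $k\succ_v i$ satisfies $d(v,k)\le d(v,i)$, hence $d(i,k)\le 2d(v,i)$ and $d(v,i)\ge t/2\ge s$ once $t\ge 2s$. The claim follows directly in this case.

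\emph{Case $k\in N$.} This includes $k=i$. The path $j=z_0\to\cdots\to z_h=k$ in $G_\lambda(j,k)$ goes from $F$ to $N$, so it contains a crossing arc $a\to b$ with $a\in F$, $b\in N$, and $w_{bj}\le\lambda w_{ak}$. Two metric facts are needed here.
\begin{itemize}
\item Every voter close to $i$ (with $d(v,i)<s$) prefers $b$ to $j$. This holds because $d(v,b)<s+t$ while $d(v,j)>D-s$, which works once $t\le D-2s$. So these voters account for at most $w_{bj}$.
\item Every voter who prefers $a$ to $k$ is far from $i$, because $a$ is far from $i$. This bounds $w_{ak}$ by the far mass.
\end{itemize}
Combining them through $w_{bj}\le\lambda w_{ak}$, together with $w_{ij}\le 1$ and the fact that voters close to $i$ also prefer $i$ to $j$, should give the threshold inequality.

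The key point is that only one arc of the path is used for each threshold $s$. Different thresholds may use different arcs, but no metric loss accumulates along the path. This matches the informal description in the paper.

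I expect the main obstacle to be the second case: choosing the cut so that both facts hold at once and the resulting inequality is exactly $\Pr[d(v,i)\ge s]\ge w_{ij}/\lambda$, with no extra constant. The trouble is that $k$ need not be near $i$, so a voter preferring $a$ to $k$ is not automatically far from $i$. What I would try first is a cut defined by comparing distances to $i$ and to $j$, rather than distance to $i$ alone: for example, $N=\{c: d(c,i)\le d(c,j)-(D-2s)\}$ or a similar threshold in the quantity $d(c,j)-d(c,i)$. With such a cut, the case analysis on whether $k$ lies on the $i$-side can use admissibility (or $k=i$) for one side and the crossing arc for the other. If the cut can only give the bound for a sub-range of $s$, I would fall back on handling $w_{ij}$ versus $\lambda w_{ki}$ together by splitting the integral at the point where the admissibility bound takes over.
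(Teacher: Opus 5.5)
\textbf{There is a genuine gap, and it comes before any cut is chosen.} Your first reduction replaces the lemma by the sufficient condition $w_{ij}D\le 2\lambda\,\mathrm{SC}(i,d)/n$, and then by the stronger pointwise threshold claim. Both are false under the lemma's hypotheses.

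Take $\lambda=\sqrt2$ and three candidates $i,j,k$. Use the pseudo-metric in which $i$ and $k$ coincide, $d(i,j)=d(k,j)=2$, and a point $u$ lies at distance $1$ from each of $i,j,k$. Place the voters as follows:
\begin{itemize}
\item mass $0.55$ at $i$, ranking $k\succ i\succ j$;
\item mass $0.35$ at $u$, ranking $i\succ j\succ k$;
\item mass $0.1$ at $j$, ranking $j\succ k\succ i$.
\end{itemize}
These rankings are consistent because ties are allowed.

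This gives $w_{ij}=0.9$, $w_{ki}=0.65$, $w_{kj}=0.55$ and $w_{jk}=0.45$. So $k$ is $\sqrt2$-admissible, since $0.9\le\sqrt2\cdot 0.65$. Also $j\to k$ is an arc of $G_{\sqrt2}(j,k)$, since $0.55\le\sqrt2\cdot0.45$.

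However, $\mathrm{SC}(i,d)/n=0.55$. Hence $2\lambda\,\mathrm{SC}(i,d)/n\approx1.56<1.8=w_{ij}D$. Likewise $\Pr_v[d(v,i)\ge s]=0.45<0.9/\sqrt2$ for every $s\in(0,1)$. The lemma itself holds here, since $\mathrm{SC}(j,d)/\mathrm{SC}(i,d)\approx 2.64$.

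The loss is in the bound $d(v,j)-d(v,i)\le D$ for voters preferring $i$ to $j$. The voters at $u$ contribute $0$ to $\mathrm{SC}(j,d)-\mathrm{SC}(i,d)$, but you charge each of them $D$. So no calibration of the cut $N=\{c:d(c,i)<t\}$ can rescue the plan. Two further problems point the same way. Your window $[2s,D-2s]$ is empty for $s>D/4$. And, as you suspected, voters preferring $a$ to $k$ need not be far from $i$.

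\textbf{The missing ingredient is a metric argument that controls $\mathrm{SC}(j,d)-\mathrm{SC}(i,d)$ and $\mathrm{SC}(i,d)$ jointly.} The paper does not do this from scratch. It invokes the comparator-wise biased-metric result of Charikar et al.\ (their Corollary~4.3). That result reduces the metric side to a purely combinatorial cut condition: for every partition $I\sqcup J=C$ with $i,k\in I$ and $j\in J$,
\[
\min_{a\in I}w_{aj}\le\lambda\max_{b\in J}\{w_{bi},w_{bk}\}.
\]
These cuts are arbitrary partitions of the candidates with $i$ and $k$ on the same side. They are not balls around $i$ with a case split on where $k$ lies, and the right-hand side uses $\max\{w_{bi},w_{bk}\}$.

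Once that corollary is granted, your combinatorial instinct is exactly the paper's argument. Let $r$ be the first index with $z_r\in I$. Then
\[
\min_{a\in I}w_{aj}\le w_{z_r,j}\le\lambda w_{z_{r-1},k}\le\lambda\max_{b\in J}\{w_{bi},w_{bk}\},
\]
so a single crossing arc handles each cut, and no loss accumulates along the path.
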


\begin{proof}
To derive the distortion guarantee, 
we use the following result of Charikar et al.~\cite{charikar2025metric} and state it in our notation. 
Charikar et al. denote the fixed comparator by $i^*$, but their corollary is comparator-wise: its biased-metric argument may be rooted at any fixed candidate, without assuming that candidate is socially optimal under $d$.
\begin{proposition}[\cite{charikar2025metric}, Corollary~4.3]
     $\mathrm{SC}(j,d)\le(1+2\lambda)\mathrm{SC}(i,d)$ if  there exists a candidate $k$ such that either $k=i$ or $w_{ij}\le \lambda w_{ki}$, and for all partitions of the candidates $I\sqcup J=C$ such that $i,k\in I$ and $j\in J$,
     \begin{align}
    \min_{a\in I} w_{aj}
    &\le
    \lambda\max_{b\in J}\{w_{bi},w_{bk}\}.
    \label{eq:path-cut-certificate}
\end{align}
\end{proposition}
The sets $I$ and $J$ here are disjoint sets of candidates, and such a partition may be viewed as a cut that separates $j$ from the proxy $k$.
The $\lambda$-admissibility of $k$ is exactly the first condition of the proposition, so it remains to establish \eqref{eq:path-cut-certificate} for every such cut.

Because $k$ is reachable from $j$ in $G_\lambda(j,k)$, there is a directed path $j=z_0\to z_1\to\cdots\to z_q=k$.
The condition defining the arcs of $G_\lambda(j,k)$ gives $w_{z_{\ell+1},j}\le\lambda w_{z_\ell,k}$ for every $\ell\in\{0,\ldots,q-1\}$.

Fix an arbitrary partition $I\sqcup J=C$ with $i,k\in I$ and $j\in J$.
The path starts at $z_0=j\in J$ and ends at $z_q=k\in I$.
It must therefore cross the cut from $J$ to $I$ at least once.
Let $r$ be the smallest index for which $z_r\in I$; then $r\ge 1$ and $z_{r-1}\in J$. We have
\begin{align*}
    \min_{a\in I} w_{aj}
    &\le w_{z_r,j}
    \le \lambda w_{z_{r-1},k}
    \le \lambda\max_{b\in J}\{w_{bi},w_{bk}\},
\end{align*}
where the first inequality holds because $z_r$ is one of the candidates in $I$, the middle inequality is the definition of the crossing arc $z_{r-1}\to z_r$, and the final inequality holds because $z_{r-1}$ is one of the candidates in $J$.
Thus, the chosen partition satisfies \eqref{eq:path-cut-certificate}. Since it is arbitrary, the condition holds for every required partition.
Only this crossing arc is used; the inequalities along the path are not composed, so a path of length $q$ does not incur a factor $\lambda^q$.
The proposition therefore implies $\mathrm{SC}(j,d)\le(1+2\lambda)\mathrm{SC}(i,d)$.
\end{proof}

Applying this lemma to every possible optimal candidate immediately gives distortion at most $1+2\lambda$ for any $\lambda$-path-unblanketed candidate.
Hence, Theorem \ref{thm:main-upper-bound} follows from the existence result of Theorem \ref{thm:path-existence} and the distortion guarantee in Lemma~\ref{lem:path-certificate}. 

The rest of the
section proves Theorem~\ref{thm:path-existence}.  

\subsection{Bad Cycles and Reachability}

For distinct candidates $i,j\in C$, call $(i,j)$ a \emph{$\lambda$-bad pair} if no $\lambda$-admissible candidate $k$ for $(i,j)$ is reachable from $j$ in $G_\lambda(j,k)$.
Thus, if $i$ were socially optimal, a bad pair $(i,j)$ would mean that $j$ fails to have a proxy-and-path certificate against $i$.
If no candidate is $\lambda$-path-unblanketed, every candidate $j$ fails against at least one such $i$.
The next lemma turns these individual failures into a cyclic obstruction and shows that each bad-pair arc is also supported by a strict pairwise inequality.
The remaining lemmas will enrich this cycle with reachability information until its existence becomes impossible at $\lambda=\sqrt{2}$.

\begin{lemma}\label{lem:bad-cycle}
If no candidate is $\lambda$-path-unblanketed, then the directed relation containing an arc $i\to j$ for every $\lambda$-bad pair $(i,j)$ contains a directed cycle.
Moreover, every bad pair $(i,j)$ satisfies $w_{ij}>\lambda w_{ji}$.
\end{lemma}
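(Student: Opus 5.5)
Both claims are short, and the plan is to prove the second one first, because the cycle argument barely depends on it.

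\emph{Strict inequality for bad pairs.} Let $(i,j)$ be a $\lambda$-bad pair. We take $k=i$, which is always $\lambda$-admissible for $(i,j)$ since $i\ne j$. Badness therefore says in particular that $i$ is not reachable from $j$ in $G_\lambda(j,i)$. If the one-arc path $j\to i$ were present, $i$ would be reachable, so this arc is absent. By the definition of arcs of $G_\lambda(j,k)$ with $a=j$, $b=i$, $k=i$, the arc $j\to i$ is present exactly when $w_{ij}\le\lambda w_{ji}$. Its absence therefore gives $w_{ij}>\lambda w_{ji}$, which is the second claim. (This is the observation already made after the definition, read contrapositively.)

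\emph{Existence of a directed cycle.} Let $B$ be the directed graph on $C$ with an arc $i\to j$ for each $\lambda$-bad pair $(i,j)$. Every arc joins distinct candidates, so $B$ has no loops. If no candidate is $\lambda$-path-unblanketed, then by definition each $j\in C$ has some $i\ne j$ such that no admissible $k$ is reachable, that is, $(i,j)$ is bad. Hence every vertex of $B$ has in-degree at least one. Start at any $j_0$, choose a bad-pair in-neighbour $j_1$ of $j_0$, then an in-neighbour $j_2$ of $j_1$, and so on. Since $C$ is finite, some vertex repeats, $j_s=j_t$ with $s<t$. The segment $j_t\to j_{t-1}\to\cdots\to j_s$ is then a directed closed walk in $B$, and by taking the first repetition it is a simple directed cycle. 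It has length at least $2$ because there are no loops.

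Neither step is a real obstacle. The only points needing care are that $k=i$ is always admissible, which requires $k\ne j$ and holds since $i\ne j$, and that the arcs of the backward walk are oriented correctly. As a consistency check, a $2$-cycle $i\to j\to i$ would force both $w_{ij}>\lambda w_{ji}$ and $w_{ji}>\lambda w_{ij}$. With $\lambda\ge1$ these are incompatible, since adding them gives $(1-\lambda)(w_{ij}+w_{ji})>0$. So any cycle produced has length at least $3$, which is useful for the later lemmas but not needed here.
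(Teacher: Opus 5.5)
Your proof is correct and follows the paper's argument essentially step for step. The strict inequality comes from the self-proxy $k=i$ being admissible while the direct arc $j\to i$ in $G_\lambda(j,i)$ is absent, and the cycle comes from every candidate having an incoming bad-pair arc, walking backward until a vertex first repeats. Your extra remark that no bad $2$-cycle can exist (adding the two strict inequalities gives $1>\lambda$) is also correct, though the paper does not need it here.
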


\begin{proof}
Suppose that no candidate is $\lambda$-path-unblanketed.
For every candidate $j$, there is some candidate $i(j)\ne j$ for which $j$ has no reachable admissible proxy; equivalently, $(i(j),j)$ is a $\lambda$-bad pair and there is an arc $i(j)\to j$.
With this orientation, every candidate $j$ has at least one selected arc entering it.

Starting from any candidate $j_0$, recursively define $j_{q+1}=i(j_q)$.
For every $q\ge 0$, the selected relation contains the arc $j_{q+1}\to j_q$.
Because $C$ is finite, there is a smallest index $q>0$ for which $j_q=j_p$ for some $p<q$.
The minimality of $q$ makes $j_p,j_{p+1},\ldots,j_{q-1}$ distinct, and the corresponding arcs $j_q\to j_{q-1}\to\cdots\to j_p=j_q$ close into a directed cycle.

It remains to prove the inequality for a bad pair $(i,j)$.
The candidate $k=i$ is automatically $\lambda$-admissible for $(i,j)$.
Since $(i,j)$ is bad, this admissible proxy is not reachable from $j$ in $G_\lambda(j,i)$; in particular, the direct arc $j\to i$ is absent.
Substituting the source $a=j$, the target $b=i$, and the proxy $k=i$ into the definition of $G_\lambda(j,i)$ shows that this arc would be present exactly when $w_{ij}\le\lambda w_{ji}$.
Its absence therefore gives $w_{ij}>\lambda w_{ji}$.
\end{proof}

Fix a directed cycle of bad pairs.
The cycle alone is not contradictory, since weighted majority relations may be cyclic.
We therefore exploit the stronger fact that, on every bad edge, each admissible proxy is unreachable from the head of that edge.
For a fixed candidate $k$ on the cycle, its status as a proxy changes as we traverse the cycle: $k$ is admissible for the edge leaving $k$, but it is forbidden as a proxy for the edge entering $k$.
Recording the last edge before this change will provide both a reachability obstruction and two-sided control of a pairwise weight.

For each candidate $k$ on this cycle, traverse the cycle from the edge leaving $k$ to the edge entering $k$, and let $F(k)$ be the head of the last bad edge for which $k$ is admissible.
The arc $k\to F(k)$ in the resulting functional digraph is a newly constructed relation and need not itself be a bad-pair arc.

\begin{lemma}\label{lem:last-admissible}
The map $F$ is well defined and satisfies $F(k)\ne k$.
Moreover, $k$ is not reachable from $F(k)$ in $G_\lambda(F(k),k)$, and
\begin{align}
    \frac{\lambda}{1+\lambda}
    &<
    w_{k,F(k)}
    <
    \frac{1}{\lambda}.
    \label{eq:last-admissible-bounds}
\end{align}
Consequently, the functional digraph of $F$ contains a directed cycle of length at least two.
\end{lemma}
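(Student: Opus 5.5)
Let the fixed bad cycle be $c_0\to c_1\to\cdots\to c_{L-1}\to c_0$ (indices mod $L$), where each arc $c_t\to c_{t+1}$ is a bad pair $(c_t,c_{t+1})$. Fix a cycle candidate $k=c_s$ and traverse the edges $(c_s,c_{s+1}),(c_{s+1},c_{s+2}),\ldots,(c_{s-1},c_s)$.

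\textbf{Well-definedness.} On the first edge, $k=c_s$ is the tail $i$, so $k$ is automatically $\lambda$-admissible. On the last edge, $k$ is the head $j$, so $k$ is not admissible because admissibility requires $k\ne j$. The set of edges on which $k$ is admissible is therefore nonempty, and it excludes the final edge. Take the last such edge $(c_t,c_{t+1})$ and set $F(k)=c_{t+1}$. Since the edge entering $k$ is excluded, $c_{t+1}\ne k$ whenever the cycle is simple. If $L=2$, the last edge is $(c_{s+1},c_s)$, so the only candidate edge is the first one and $F(k)=c_{s+1}\ne k$.

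\textbf{Non-reachability and the bounds.} Because $(c_t,c_{t+1})$ is bad and $k$ is admissible for it, $k$ is not reachable from $F(k)=c_{t+1}$ in $G_\lambda(c_{t+1},k)$. In particular the direct arc $F(k)\to k$ is absent, which means $w_{k,F(k)}>\lambda w_{F(k),k}=\lambda(1-w_{k,F(k)})$. This gives the lower bound $w_{k,F(k)}>\lambda/(1+\lambda)$.

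For the upper bound, consider the next edge $(c_{t+1},c_{t+2})$, which exists and lies in the traversal because $(c_t,c_{t+1})$ was not the final edge. Two cases arise.
\begin{enumerate}
\item If $c_{t+2}=k$, then $(F(k),k)$ is itself a bad pair. Lemma~\ref{lem:bad-cycle} gives $w_{F(k),k}>\lambda w_{k,F(k)}$. Since $w_{F(k),k}\le 1$, this yields $w_{k,F(k)}<1/\lambda$.
\item Otherwise $k\ne c_{t+2}$, so $k\ne j$ for this edge, and maximality of $t$ says $k$ is not admissible. Then $k\ne c_{t+1}$ and $w_{c_{t+1},c_{t+2}}>\lambda w_{k,c_{t+1}}$. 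Since the left side is at most $1$, again $w_{k,F(k)}<1/\lambda$.
\end{enumerate}
(For $\lambda=1$ the two bounds are compatible, and in general both are strict inequalities as required.)

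\textbf{Cycle in the functional digraph.} $F$ maps the finite set of cycle vertices into itself, since every head $c_{t+1}$ lies on the cycle. Iterating $F$ from any vertex must eventually repeat a vertex, producing a directed cycle. Since $F(k)\ne k$, this cycle has no fixed points, so its length is at least two.

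The only delicate step is the upper bound, where one must use the edge immediately after the last admissible edge. The case in which that edge enters $k$ has to be handled separately through Lemma~\ref{lem:bad-cycle}.
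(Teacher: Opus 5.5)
Your proof is correct and follows essentially the same route as the paper. The last admissible edge gives non-reachability and the lower bound via the absent arc $F(k)\to k$. The edge immediately after it gives the upper bound, with the case where that edge enters $k$ handled through the bad-pair inequality of Lemma~\ref{lem:bad-cycle}, and the cycle of length at least two then follows from finiteness and $F(k)\ne k$.
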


\begin{proof}
Write the fixed bad-pair cycle as $c_0\to c_1\to\cdots\to c_{q-1}\to c_0$, with indices modulo $q$.
Fix a vertex $k=c_s$.
Here, saying that $k$ is admissible for an edge $x\to y$ means that it is admissible for the bad pair $(x,y)$.
The proxy $k$ is admissible for the edge $(c_s,c_{s+1})$ leaving $k$, because it equals the first candidate of this bad pair.
It is not admissible for the edge $(c_{s-1},c_s)$ entering $k$, because a proxy for this pair is required to differ from its second candidate $c_s=k$.
Hence, in the prescribed finite traversal, 
its last admissible edge is well defined.
The only cycle edge whose head is $k$ is the nonadmissible entering edge; consequently, the head $F(k)$ of the last admissible edge cannot equal $k$.

Denote the last admissible edge by $i\to j$ and the next edge on the cycle by $j\to h$.
Thus, $F(k)=j$.
By the choice of the last admissible edge, $k$ is not admissible for the following edge $(j,h)$.
Because $(i,j)$ is a bad pair and $k$ is admissible for it, $k$ is not reachable from $j$ in $G_\lambda(j,k)$.
In particular, the arc $j\to k$ is absent from this graph, indicating that
$w_{kj}>\lambda w_{jk}=\lambda(1-w_{kj})$ by the definition.
Rearranging yields $w_{kj}>\frac{\lambda}{1+\lambda}$.

We next derive the upper bound on $w_{kj}$ from the edge $j\to h$ immediately following the selected edge.
If $h\ne k$, then 
the only way that $k$ can fail admissibility for $(j,h)$ is that the required inequality fails, giving $w_{jh}>\lambda w_{kj}$.
If $h=k$, nonadmissibility follows automatically from the prohibition against using the second candidate as the proxy, so instead we use the fact that $(j,k)$ is a bad pair: Lemma~\ref{lem:bad-cycle} gives $w_{jk}>\lambda w_{kj}$.
Thus, in either case, $1\ge w_{jh}>\lambda w_{kj}$, where $h=k$ is allowed in this expression.
It follows that $w_{kj}<\frac{1}{\lambda}$, completing \eqref{eq:last-admissible-bounds}.

Finally, $F$ maps the finite vertex set of the original bad-pair cycle to itself, and its functional digraph contains the arc $k\to F(k)$ for every vertex $k$.
Repeatedly applying $F$ from any starting vertex must eventually repeat a vertex, and the repeated segment forms a directed cycle.
Since $F(k)\ne k$ for every $k$, this cycle cannot be a self-loop and therefore has length at least two.
\end{proof}

The $F$-cycle is more structured than the original bad-pair cycle: every arc $k\to F(k)$ carries both the nonreachability conclusion and the common interval in \eqref{eq:last-admissible-bounds}.
We next relabel this secondary cycle so that these constraints have the same form at every index and translate its nonreachability statements into cut inequalities.

Fix a directed $F$-cycle, identify its vertices with $\mathbb Z_r=\{0,\ldots,r-1\}$, and relabel it so that $F(t)=t+1$ modulo $r$.
Let $\widehat R_t$ be the set of all candidates reachable from $t+1$ in $G_\lambda(t+1,t)$, and let $R_t=\widehat R_t\cap\mathbb Z_r$.
The full set $\widehat R_t$ allows a path to use candidates outside the $F$-cycle, whereas $R_t$ records only the cycle vertices needed by the subsequent cyclic argument.
Lemma~\ref{lem:last-admissible} ensures that this reachable side contains $t+1$ but excludes the intended proxy $t$.
Since no graph arc can leave the reachable set, translating the absence of every such arc back through the definition of $G_\lambda(t+1,t)$ gives the following cut inequalities.

\begin{lemma}\label{lem:reachability-cut}
For every $t\in\mathbb Z_r$, we have $t+1\in R_t$ and $t\notin R_t$.
Moreover, if $a\in R_t$ and $b\in\mathbb Z_r\setminus R_t$, then $w_{b,t+1}>\lambda w_{a,t}$.
\end{lemma}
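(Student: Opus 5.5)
The plan is to read off both parts of the lemma from Lemma~\ref{lem:last-admissible} and the definition of $G_\lambda(t+1,t)$. Since the relabelling gives $F(t)=t+1$, Lemma~\ref{lem:last-admissible} applied to the proxy $k=t$ says that $t$ is not reachable from $F(t)=t+1$ in $G_\lambda(t+1,t)$. Every vertex is reachable from itself by the empty path, so $t+1\in\widehat R_t$. Since $t+1$ is a cycle vertex, this gives $t+1\in R_t$. Nonreachability of $t$ gives $t\notin\widehat R_t$, hence $t\notin R_t$. The graph $G_\lambda(t+1,t)$ is well defined here because $t\ne t+1$, which holds since the $F$-cycle has length at least two.

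For the cut inequality, fix $a\in R_t$ and $b\in\mathbb Z_r\setminus R_t$. Because $b$ is a cycle vertex, $b\notin R_t$ is equivalent to $b\notin\widehat R_t$. Because $a\in\widehat R_t$, there is a path from $t+1$ to $a$ in $G_\lambda(t+1,t)$.

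If the arc $a\to b$ were present in $G_\lambda(t+1,t)$, we could extend this path to reach $b$, a contradiction. So the arc is absent. The definition of $G_\lambda(j,k)$ with $j=t+1$ and $k=t$ says that $a\to b$ is present iff $w_{b,t+1}\le\lambda w_{a,t}$. Its absence therefore gives exactly $w_{b,t+1}>\lambda w_{a,t}$.

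One degenerate case needs checking: whether $a=b$ could occur, or whether self-loops matter. Since $a\in R_t$ and $b\notin R_t$, we have $a\neq b$, so no convention issue with $w_{aa}=\tfrac12$ arises for the arc itself.

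Nothing here is a real obstacle. The only point needing care is the bookkeeping between $\widehat R_t$ and $R_t$. Reachability is defined in the full candidate set, and paths may pass through candidates off the $F$-cycle. The closure argument must be applied to $\widehat R_t$, and only then restricted to the cycle vertices, using the fact that membership of a cycle vertex in $R_t$ and in $\widehat R_t$ coincide.
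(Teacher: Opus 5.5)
Your proposal is correct and follows essentially the same route as the paper: $t+1$ is reachable via the trivial path, $t$ is unreachable by Lemma~\ref{lem:last-admissible} since $F(t)=t+1$, and the cut inequality comes from the absence of the arc $a\to b$, with the closure argument carried out in $\widehat R_t$ and then restricted to cycle vertices. Your extra remarks that $t\ne t+1$ (because the $F$-cycle has length at least two) and that $a\ne b$ are harmless bookkeeping that the paper leaves implicit.
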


\begin{proof}
The vertex $t+1$ is reachable from itself by the standard path of length zero in $G_\lambda(t+1,t)$.
Since $t+1$ is a vertex of the $F$-cycle, it belongs to $R_t$.

The equality $F(t)=t+1$ and Lemma~\ref{lem:last-admissible} state that $t$ is not reachable from $t+1$ in $G_\lambda(t+1,t)$.
Consequently, $t\notin\widehat R_t$ and hence $t\notin R_t$.

Now fix $a\in R_t$ and $b\in\mathbb Z_r\setminus R_t$.
The inclusion $a\in R_t\subseteq\widehat R_t$ means that $a$ is reachable from $t+1$.
Since $b$ is a cycle vertex, the facts $b\in\mathbb Z_r$ and $b\notin R_t=\widehat R_t\cap\mathbb Z_r$ imply that $b\notin\widehat R_t$ and is not reachable.
If $w_{b,t+1}\le\lambda w_{a,t}$, then the definition of $G_\lambda(t+1,t)$ would include the arc $a\to b$.
Concatenating a path from $t+1$ to $a$ with this arc would make $b$ reachable, a contradiction.
Therefore, $w_{b,t+1}>\lambda w_{a,t}$.
\end{proof}

Lemma~\ref{lem:path-certificate} and Lemma~\ref{lem:reachability-cut} express two sides of the same path--cut principle.
A path crosses every cut separating its endpoints, while an unreachable target produces a canonical cut with no outgoing arc from its reachable side.
The latter absence supplies a strict reverse inequality for every pair of cycle vertices across the cut. 
We next summarize these families of inequalities by scalar quantities that can be compared at consecutive indices.

With all indices taken modulo $r$, define $y_t=w_{t,t+1}$ as the pairwise weight on the $F$-cycle arc $t\to t+1$, and $p_t=\max_{a\in R_t}w_{a,t}$ as the largest weight against the proxy $t$ among the reachable cycle vertices. Let $E_t=R_t\setminus R_{t-1}$ be the set of new vertices when the reachable-set sequence moves from $R_{t-1}$ to $R_t$.

\begin{lemma}\label{lem:entrant-growth}
For every $t\in\mathbb Z_r$,
\begin{align}
    \frac{\lambda}{1+\lambda}
    &<
    y_t
    <
    \frac{1}{\lambda},
    \qquad
    1-\frac{1}{\lambda}
    <
    p_t
    <
    \frac{y_t}{\lambda}
    <
    \frac{1}{\lambda^2}.
    \label{eq:entrant-growth-bounds}
\end{align}
If $E_t\ne\varnothing$, then $p_t>\lambda p_{t-1}$.
\end{lemma}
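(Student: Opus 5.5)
The bounds on $y_t$ come straight from Lemma~\ref{lem:last-admissible}. The bounds on $p_t$ and the growth statement both come from choosing the right pair $(a,b)$ in the cut inequalities of Lemma~\ref{lem:reachability-cut}, at index $t$ and at index $t-1$ respectively.

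\emph{Bounds on $y_t$.} After relabeling we have $F(t)=t+1$. Inequality \eqref{eq:last-admissible-bounds} with $k=t$ is exactly $\frac{\lambda}{1+\lambda}<w_{t,t+1}=y_t<\frac1\lambda$.

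\emph{Lower bound on $p_t$.} Lemma~\ref{lem:reachability-cut} gives $t+1\in R_t$. Hence $p_t\ge w_{t+1,t}=1-y_t$, and combining this with $y_t<\frac1\lambda$ gives $p_t>1-\frac1\lambda$.

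\emph{Upper bound on $p_t$.} Lemma~\ref{lem:reachability-cut} also gives $t\in\mathbb Z_r\setminus R_t$, so we may take $b=t$ in its cut inequality. For every $a\in R_t$ this yields $w_{t,t+1}>\lambda w_{a,t}$. Taking $a$ to be the maximizer defining $p_t$ (the set $R_t$ is nonempty and finite) gives $p_t<y_t/\lambda$. The bound $y_t<\frac1\lambda$ then gives $p_t<\frac1{\lambda^2}$.

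\emph{Growth when $E_t\ne\varnothing$.} Pick $e\in E_t=R_t\setminus R_{t-1}$. Then $e$ is a cycle vertex with $e\notin R_{t-1}$, so it lies on the unreachable side of the cut for index $t-1$. Apply Lemma~\ref{lem:reachability-cut} at index $t-1$. The relevant graph is $G_\lambda(t,t-1)$, and the inequality reads $w_{b,t}>\lambda w_{a,t-1}$ for $a\in R_{t-1}$ and $b\notin R_{t-1}$. Take $b=e$ and let $a\in R_{t-1}$ attain $p_{t-1}$. This gives $w_{e,t}>\lambda p_{t-1}$. Since $e\in R_t$, the definition of $p_t$ gives $p_t\ge w_{e,t}$, and therefore $p_t>\lambda p_{t-1}$.

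The only step needing care is the index bookkeeping in the growth step. At index $t-1$ the "head" candidate is $t$, so the weight that appears against $b$ is $w_{b,t}$. This is precisely the quantity maximized in $p_t$, and that match is what makes the argument go through. The remaining steps are direct substitutions.
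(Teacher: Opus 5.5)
Your proposal is correct and follows essentially the same route as the paper's proof: the bounds on $y_t$ come from Lemma~\ref{lem:last-admissible}, the bounds on $p_t$ come from $t+1\in R_t$ together with the cut inequality at index $t$ with $b=t$, and the growth step applies the cut at index $t-1$ with an entrant on the unreachable side and a maximizer for $p_{t-1}$ on the reachable side. Your remark that the weight $w_{b,t}$ appearing in the index-$(t-1)$ cut is exactly what $p_t$ maximizes is the same index match the paper relies on.
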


\begin{proof}
Since $F(t)=t+1$, the first two inequalities in \eqref{eq:entrant-growth-bounds} are exactly \eqref{eq:last-admissible-bounds} with $k=t$.

By Lemma~\ref{lem:reachability-cut}, the candidate $t+1$ belongs to $R_t$.
In particular, $R_t$ is nonempty and the maximum defining $p_t$ exists.
Using $t+1$ as one candidate in this maximum and applying complementarity give $p_t\ge w_{t+1,t}=1-y_t>1-\frac{1}{\lambda}$.

The same lemma gives $t\notin R_t$.
Thus, for every $a\in R_t$, the vertices $a$ and $b=t$ lie on the reachable and unreachable sides, respectively, of the cut at index $t$.
Substituting $b=t$ into the cut inequality gives $y_t=w_{t,t+1}>\lambda w_{a,t}$.
Taking the maximum over $a\in R_t$ gives $p_t<\frac{y_t}{\lambda}$.
Together with $y_t<\frac{1}{\lambda}$, this proves all inequalities in \eqref{eq:entrant-growth-bounds}.

Suppose now that $E_t\ne\varnothing$, and choose entrant $a\in E_t$.
Then $a\in R_t$ but $a\notin R_{t-1}$.
Lemma~\ref{lem:reachability-cut} at index $t-1$ gives $t\in R_{t-1}$, so this old reachable set is nonempty; choose a maximizer $u\in R_{t-1}$ with $w_{u,t-1}=p_{t-1}$.
In the old cut, $u$ lies on the reachable side and the entrant $a$ still lies on the unreachable side.
Applying its cut inequality gives $w_{a,t}>\lambda w_{u,t-1}=\lambda p_{t-1}$.
After the shift to the new set, $a\in R_t$ makes $w_{a,t}$ one of the quantities maximized by $p_t$.
Hence, $p_t\ge w_{a,t}>\lambda p_{t-1}$.
\end{proof}

Lemma~\ref{lem:entrant-growth} converts changes in the cyclic set family into multiplicative numerical growth while keeping every $p_t$ inside the common bounded interval in \eqref{eq:entrant-growth-bounds}.
In particular, if $E_t\ne\varnothing$ for every index $t$, iterating the growth inequality once around the cycle yields a strict chain and forces $p_t>\lambda^r p_t$, which is impossible.

\subsection{Cyclic Expansion}

The next lemma is a purely combinatorial property of cyclic set families $(R_t)_{t\in\mathbb Z_r}$ satisfying $t+1\in R_t$ and $t\notin R_t$ for every $t$.
It describes what must occur when these sets change around the cycle.
The conditions $t+1\in R_t$ and $t\notin R_t$ force the sets to change somewhere around the cycle, while a nonentrant transition $R_t\setminus R_{t-1}=\varnothing$ can only shrink the set, since then $R_t\subseteq R_{t-1}$.
The lemma shows that these competing requirements necessarily create one of three local configurations tailored to the numerical inequalities already established. 

\begin{lemma}\label{lem:cyclic-expansion}
Let $R_t\subseteq\mathbb Z_r$ satisfy $t+1\in R_t$ and $t\notin R_t$ for every $t\in\mathbb Z_r$, and let $E_t=R_t\setminus R_{t-1}$.
At least one of the following configurations must occur:
\begin{enumerate}
    \item \emph{Mutual containment:} there exist distinct $s,t$ such that $s\in R_t$ and $t\in R_s$.
    \item \emph{Consecutive entrants:} there exists $t$ such that $E_t\ne\varnothing$ and $E_{t+1}\ne\varnothing$.
    \item \emph{Crossing gap:} there exist $t,b$ such that $E_t\ne\varnothing$, $b\notin R_t$, and $t+1\notin R_{b-1}$.
\end{enumerate}
\end{lemma}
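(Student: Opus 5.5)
The plan is to argue by contradiction: assume none of the three configurations occurs and derive an impossibility. There are three steps: show that some entrant index exists, use the absence of consecutive entrants to get one inclusion, and then show that the absence of a crossing gap and of mutual containment makes the complement of a single set closed under taking predecessors.

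\emph{Step 1: some $E_t$ is nonempty.} Suppose $E_t=\varnothing$ for every $t$. Then $R_t\subseteq R_{t-1}$ for all $t$. Going once around the cycle $\mathbb Z_r$ gives the chain
$R_t\subseteq R_{t-1}\subseteq\cdots\subseteq R_{t+1}\subseteq R_t$, so all the sets are equal. In particular $R_t=R_{t+1}$. This contradicts $t+1\in R_t$ together with $t+1\notin R_{t+1}$. Hence we may fix an index $t$ with $E_t\ne\varnothing$.

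\emph{Step 2: one inclusion from the absence of consecutive entrants.} Since configuration~2 fails, $E_{t+1}=\varnothing$, which means $R_{t+1}\subseteq R_t$. Since configuration~3 fails for this $t$, every $b\notin R_t$ satisfies $t+1\in R_{b-1}$.

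\emph{Step 3: the complement of $R_t$ is closed under predecessors.} Fix $b\notin R_t$. I claim $b-1\notin R_t$.

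First I check that $b-1\ne t+1$. If $b-1=t+1$, then $b=t+2$. But $t+2\in R_{t+1}\subseteq R_t$, contradicting $b\notin R_t$.

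Now $b-1$ and $t+1$ are distinct, and Step 2 gives $t+1\in R_{b-1}$. Since configuration~1 fails, we cannot also have $b-1\in R_{t+1}$, so $b-1\notin R_{t+1}$.

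Still, $b-1\notin R_{t+1}$ alone does not give $b-1\notin R_t$, because the inclusion runs $R_{t+1}\subseteq R_t$. This is the step I expect to be the main obstacle: passing from $R_{t+1}$ back to $R_t$.

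What I would try first is to avoid $R_{t+1}$ and apply the mutual-containment exclusion directly to the pair $\{b-1,t\}$. I would look for a membership of the form $t\in R_{b-1}$. Configuration~3 only supplies $t+1\in R_{b-1}$, so I will re-examine which index the crossing gap should be applied at, shifting $t$ by one if necessary, so that it yields exactly $t\in R_{b-1}$. Then mutual containment forces $b-1\notin R_t$ (provided $b-1\ne t$), which closes the step.

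If that does not work, the fallback is an extremal choice. Pick $b\notin R_t$ such that the cyclic distance from $t+1$ to $b$ is maximal, and iterate Step 2 along the maximal run of non-entrant indices after $t$. On that run $R_{t+1}$, $R_{t+2},\ldots$ are all contained in $R_t$, which should give enough control to rule out $b-1\in R_t\setminus R_{t+1}$.

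\emph{Conclusion.} Once predecessor-closure is established, the argument finishes at once. The complement $\mathbb Z_r\setminus R_t$ is nonempty because it contains $t$. Repeatedly taking predecessors from $t$ reaches every element of $\mathbb Z_r$, so the complement is all of $\mathbb Z_r$. This contradicts $t+1\in R_t$. Therefore at least one of the three configurations must occur.
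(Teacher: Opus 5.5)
Your Steps 1 and 2 are fine, but Step 3 is a genuine gap, and a cleverer local argument cannot close it. Both proposed repairs use only data attached to the single entrant index $t$: the crossing-gap exclusion at $t$, the inclusions along the non-entrant run after $t$, and the global absence of mutual containment.

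Your first idea, obtaining $t\in R_{b-1}$ by shifting the crossing-gap index, would need the exclusion at $t-1$. That requires $E_{t-1}\ne\varnothing$, which the absence of consecutive entrants forbids since $E_t\ne\varnothing$.

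More decisively, predecessor-closure of $\mathbb Z_r\setminus R_t$ is false under all the hypotheses you invoke. Take $r=8$ and
\[
R_0=\{1,2,3,4\},\quad R_1=\{2\},\quad R_2=\{3,4\},\quad R_3=\{4\},\quad R_4=\{1,5,6\},\quad R_5=\{1,6\},\quad R_6=\{0,1,7\},\quad R_7=\{0,1\}.
\]
This family has the following properties:
\begin{itemize}
\item $t+1\in R_t$ and $t\notin R_t$ for every $t$.
\item No mutual containment occurs.
\item The entrant indices are exactly $0,2,4,6$, so there are no consecutive entrants anywhere.
\item There is no crossing gap at $t=0$, since every $b\in\{0,5,6,7\}=\mathbb Z_8\setminus R_0$ has $1\in R_{b-1}$.
\end{itemize}
Yet $5\notin R_0$ while $4\in R_0$. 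Your deductions with $b=5$ give $1\in R_4$ and then $4\notin R_1$, but $4\in R_0\setminus R_1$, which is exactly the obstacle you flagged. The membership you hoped for also fails, since $0\notin R_4$. This family is excluded only by crossing gaps at the other entrant indices, for example $t=2$, $b=0$, where $0\notin R_2$ and $3\notin R_7$. So any correct proof must use the crossing-gap exclusion at several entrant indices together.

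The missing idea is to work with the whole cyclic list $u_0,\ldots,u_{s-1}$ of entrant indices, as the paper does.
\begin{enumerate}
\item The inclusions across non-entrant runs give the base containments $u_{a+1}\in R_{u_a}$. These already settle $s=1$ (self-containment) and $s=2$ (mutual containment).
\item From $u_b\in R_{u_a}$, apply in turn: no mutual containment, the crossing-gap exclusion at $u_b$, the inclusion chain between $u_{a-1}$ and $u_a$, no mutual containment again, and the crossing-gap exclusion at $u_{a-1}$. This yields the expansion rule $u_b\in R_{u_a}\Rightarrow u_{b+1}\in R_{u_{a-1}}$.
\item Iterating this from the base containments pushes the two endpoints apart around the entrant list until they meet. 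For odd $s$ this gives some $u_x\in R_{u_x}$. For even $s$ it gives $u_{x-1}\in R_{u_x}$, which together with the base containment $u_x\in R_{u_{x-1}}$ is a mutual containment.
\end{enumerate}
This coordination of crossing-gap exclusions at different entrants, closed off by parity, is what a single-index predecessor-closure argument cannot reproduce.
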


\begin{proof}
Suppose for contradiction that none of the three configurations occurs.
Let $U=\{t\in\mathbb Z_r:E_t\ne\varnothing\}$ be the set of entrant indices.

We first show that $U$ is nonempty.
If $U$ were empty, then $R_t\subseteq R_{t-1}$ for every $t$.
Following these inclusions once around the cycle returns to the starting set, so every inclusion must be an equality and all the sets $R_t$ would coincide.
For any $t$, however, the assumptions give $t+1\in R_t$ and $t+1\notin R_{t+1}$, contradicting that equality.

List the elements of $U$ in cyclic order as $u_0,u_1,\ldots,u_{s-1}$.
Subscripts on the sequence $(u_a)$ are taken modulo $s$, while arithmetic on the values $u_a\in\mathbb Z_r$ is taken modulo $r$.
Intervals between two entrant indices are always read in the forward cyclic order and may wrap around $0$.
Because consecutive entrants do not occur, at least one index outside $U$ lies strictly between $u_a$ and $u_{a+1}$ for every $a$.
The definition of $E_v$ gives
\begin{align}
    R_v&\subseteq R_{v-1} \qquad\text{for every }v\notin U.
    \label{eq:nonentrant-inclusion}
\end{align}

The successor condition gives $u_{a+1}\in R_{u_{a+1}-1}$.
Every index encountered after $u_a$ and before $u_{a+1}$ lies outside $U$, so the sets can only shrink while moving forward through this interval.
Propagating the preceding membership backward through the resulting inclusions gives the base containment
\begin{align}
    u_{a+1}&\in R_{u_a}
    \qquad\text{for every }a.
    \label{eq:cyclic-base-containment}
\end{align}
If $s=1$, \eqref{eq:cyclic-base-containment} says $u_0\in R_{u_0}$, immediately contradicting self-exclusion.
Hence, it remains only to consider $s\ge 2$.

The absence of a crossing gap is equivalent to the following implication:
\begin{align}
    t\in U \text{ and } c\notin R_t
    &\quad\Longrightarrow\quad
    t+1\in R_{c-1}.
    \label{eq:no-crossing-implication}
\end{align}
Thus, at an entrant index $t$, excluding a candidate $c$ from $R_t$ forces the successor $t+1$ into $R_{c-1}$.
We use this implication, together with the absence of mutual containment, to expand a containment one step in both directions among the cyclically ordered entrant indices.

Fix $a,b$ such that $u_b\in R_{u_a}$.
The self-exclusion property implies $u_b\ne u_a$.
Because mutual containment does not occur, $u_a\notin R_{u_b}$.
Applying \eqref{eq:no-crossing-implication} with $t=u_b$ and $c=u_a$ gives $u_b+1\in R_{u_a-1}$.

The indices strictly between the consecutive entrant indices $u_{a-1}$ and $u_a$ lie outside $U$.
Repeated use of \eqref{eq:nonentrant-inclusion} therefore gives $R_{u_a-1}\subseteq R_{u_{a-1}+1}$, and hence $u_b+1\in R_{u_{a-1}+1}$.
If $u_b+1=u_{a-1}+1$, this membership contradicts the self-exclusion property.
We may therefore assume that these two candidates are distinct.
The absence of mutual containment then implies $u_{a-1}+1\notin R_{u_b+1}$.

All indices from $u_b+1$ through $u_{b+1}-1$ lie outside $U$.
Consequently, \eqref{eq:nonentrant-inclusion} gives $R_{u_{b+1}-1}\subseteq R_{u_b+1}$, and thus $u_{a-1}+1\notin R_{u_{b+1}-1}$.
If $u_{b+1}\notin R_{u_{a-1}}$, then applying \eqref{eq:no-crossing-implication} with $t=u_{a-1}$ and $c=u_{b+1}$ would instead give $u_{a-1}+1\in R_{u_{b+1}-1}$.
This contradiction proves the expansion rule
\begin{align}
    u_b\in R_{u_a}
    &\quad\Longrightarrow\quad
    u_{b+1}\in R_{u_{a-1}}.
    \label{eq:cyclic-expansion-rule}
\end{align}
In terms of positions in the entrant list, the rule replaces the containment indexed by $(a,b)$ with one indexed by $(a-1,b+1)$, moving its two endpoints apart around the cycle.

Starting from \eqref{eq:cyclic-base-containment} and applying \eqref{eq:cyclic-expansion-rule} repeatedly, we obtain, for every integer $h\ge 0$,
\begin{align}
    u_{a+1+h}&\in R_{u_{a-h}}.
    \label{eq:cyclic-iterated-expansion}
\end{align}

Suppose first that $s$ is odd, and set $h=\frac{s-1}{2}$ in \eqref{eq:cyclic-iterated-expansion}.
Then $a+1+h$ and $a-h$ are equal modulo $s$, so \eqref{eq:cyclic-iterated-expansion} asserts that some candidate belongs to its own set, a contradiction to self-exclusion.

Suppose instead that $s$ is even, and set $h=\frac{s-2}{2}$.
Writing $x=a-h$ modulo $s$, equation \eqref{eq:cyclic-iterated-expansion} becomes $u_{x-1}\in R_{u_x}$.
On the other hand, \eqref{eq:cyclic-base-containment} with index $x-1$ gives $u_x\in R_{u_{x-1}}$.
These two containments form a mutual-containment configuration, again a contradiction.

Therefore, in either case the assumption that all three configurations are absent is impossible.
\end{proof}

All preceding structural lemmas hold for a general parameter $\lambda\ge 1$.
We now specialize to $\lambda=\sqrt{2}$ to derive contradictions.
The three configurations were chosen to expose three different contradictions: mutual containment makes both directions of one pair too small, consecutive entrants force two multiplicative increases of $p_t$, and a crossing gap makes both directions of one pair too large.

\begin{lemma}\label{lem:sqrt2-exclusion}
When $\lambda=\sqrt{2}$, each of the three configurations in Lemma~\ref{lem:cyclic-expansion}  is impossible. 
\end{lemma}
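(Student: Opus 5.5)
The plan is to turn each configuration into a strict inequality that contradicts complementarity $w_{xy}+w_{yx}=1$. The tools are the interval bounds of Lemma~\ref{lem:entrant-growth}, the growth inequality $p_t>\lambda p_{t-1}$ at entrant indices, and the cut inequalities of Lemma~\ref{lem:reachability-cut}. At $\lambda=\sqrt2$ we have $\lambda^2=2$, so Lemma~\ref{lem:entrant-growth} gives
\[
2-\sqrt2 < 2p_t \qquad\text{and}\qquad p_t<\tfrac12
\]
for every $t$. The three cases are chosen so that each numerical contradiction sits exactly at these thresholds.

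\emph{Mutual containment.} Suppose $s\in R_t$ and $t\in R_s$ with $s\ne t$. Then $w_{s,t}$ is one of the quantities maximized in $p_t$, so $p_t\ge w_{s,t}$. Likewise $p_s\ge w_{t,s}$. Adding these gives $p_t+p_s\ge w_{s,t}+w_{t,s}=1$. But each $p$ is strictly below $1/\lambda^2=\frac12$, which is a contradiction.

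\emph{Consecutive entrants.} Suppose $E_t$ and $E_{t+1}$ are both nonempty. Applying the growth inequality twice gives $p_{t+1}>\lambda p_t>\lambda^2 p_{t-1}=2p_{t-1}$. The lower bound $p_{t-1}>1-\frac1{\sqrt2}$ then gives $p_{t+1}>2-\sqrt2>\frac12$. This contradicts $p_{t+1}<\frac12$.

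\emph{Crossing gap.} Suppose $E_t\ne\varnothing$, $b\notin R_t$ and $t+1\notin R_{b-1}$. First note that $b\ne t+1$, since $t+1\in R_t$. We then use two cuts.

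\begin{itemize}
\item At index $t$, take $a$ to be a maximizer of $p_t$ in $R_t$. The cut inequality with the unreachable vertex $b$ gives $w_{b,t+1}>\lambda p_t$. Since $E_t\ne\varnothing$, we also have $p_t>\lambda p_{t-1}$. Hence $w_{b,t+1}>2p_{t-1}>2-\sqrt2$.
\item At index $b-1$, Lemma~\ref{lem:reachability-cut} gives $b\in R_{b-1}$, so $R_{b-1}$ is nonempty and has a maximizer for $p_{b-1}$. The vertex $t+1$ is a cycle vertex not in $R_{b-1}$. The cut inequality gives $w_{t+1,b}>\lambda p_{b-1}>\lambda\bigl(1-\frac1\lambda\bigr)=\sqrt2-1$.
\end{itemize}

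Adding the two bounds yields $w_{b,t+1}+w_{t+1,b}>1$, contradicting complementarity.

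The main obstacle is this crossing-gap case. One has to pick the right cuts and see that the entrant growth must be spent on the $b$ side to reach exactly the threshold $2-\sqrt2$, with the remaining slack $\sqrt2-1$ supplied by the lower bound on $p_{b-1}$. The other two cases are direct.
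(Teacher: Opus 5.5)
Your proof is correct and follows the paper's argument case by case, using the same cuts and the same thresholds $\frac12$, $2-\sqrt2$, and $\sqrt2-1$. The only difference is cosmetic: you route the cut inequalities through the maxima $p_t$ and the bounds already packaged in Lemma~\ref{lem:entrant-growth}, whereas the paper applies Lemma~\ref{lem:reachability-cut} directly with the specific witnesses $t$, $b$, and an entrant $a\in E_t$.
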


\begin{proof}
We exclude the three configurations separately.
First, suppose that mutual containment occurs, so distinct candidates $s,t$ satisfy $s\in R_t$ and $t\in R_s$.
In the cut at index $t$, the candidate $s$ lies on the side reachable from $t+1$ in $G_\lambda(t+1,t)$, whereas the self-exclusion property puts $t$ on the unreachable side.
Lemma~\ref{lem:reachability-cut}, applied with $a=s$ and $b=t$, therefore gives $y_t=w_{t,t+1}>\lambda w_{s,t}$.
Interchanging $s$ and $t$ and using the cut at index $s$ similarly gives $y_s>\lambda w_{t,s}$.
Using \eqref{eq:entrant-growth-bounds} in both inequalities yields
\begin{align*}
    w_{s,t}
    <\frac{y_t}{\lambda}
    <\frac{1}{\lambda^2},
    \qquad
    w_{t,s}
    &<\frac{y_s}{\lambda}
    <\frac{1}{\lambda^2}.
\end{align*}
Therefore, $w_{s,t}+w_{t,s}<\frac{2}{\lambda^2}=1$, contradicting $w_{s,t}+w_{t,s}=1$.

Second, suppose that consecutive entrants occur at indices $t$ and $t+1$.
The entrant at $t$ gives $p_t>\lambda p_{t-1}$, and the entrant at $t+1$ gives $p_{t+1}>\lambda p_t$.
Combining these two growth steps with the common lower bound on $p_{t-1}$ gives
\begin{align*}
    p_{t+1}
    &>\lambda p_t
    >\lambda^2p_{t-1}
    >\lambda^2\left(1-\frac{1}{\lambda}\right)
    =\lambda^2-\lambda.
\end{align*}
At $\lambda=\sqrt{2}$, we have $\lambda^2-\lambda=2-\sqrt{2}>\frac{1}{2}=\frac{1}{\lambda^2}$.
This contradicts the upper bound $p_{t+1}<\frac{1}{\lambda^2}$ in \eqref{eq:entrant-growth-bounds}.

Finally, suppose that a crossing gap occurs.
Thus, there are indices $t,b$ such that $E_t\ne\varnothing$, $b\notin R_t$, and $t+1\notin R_{b-1}$.
Choose an entrant $a\in E_t$, which lies outside the old set $R_{t-1}$ and inside the new set $R_t$.
The successor property gives $t\in R_{t-1}$, so the cut at index $t-1$, with the inside vertex $t$ and outside vertex $a$, yields $w_{a,t}>\lambda w_{t,t-1}=\lambda(1-y_{t-1})$ by Lemma~\ref{lem:reachability-cut}.
In the next cut at index $t$, the entrant $a$ is inside and the crossing-gap candidate $b$ is outside.
Applying Lemma~\ref{lem:reachability-cut} again gives
\begin{align*}
    w_{b,t+1}
    &>\lambda w_{a,t}
    >\lambda^2(1-y_{t-1})
    >\lambda^2-\lambda,
\end{align*}
where the final inequality uses $y_{t-1}<\frac{1}{\lambda}$ from \eqref{eq:entrant-growth-bounds}.

For the reverse direction of the same candidate pair $b,t+1$, consider the cut at index $b-1$.
The successor property puts $b$ inside $R_{b-1}$, while the crossing-gap assumption puts $t+1$ outside it.
Lemma~\ref{lem:reachability-cut} therefore yields $w_{t+1,b}>\lambda w_{b,b-1}=\lambda(1-y_{b-1})>\lambda-1$.
These opposite membership statements also show that $b\ne t+1$, so complementarity applies to this pair.
Adding the last two strict inequalities gives
\begin{align*}
    w_{b,t+1}+w_{t+1,b}
    &>(\lambda^2-\lambda)+(\lambda-1)
    =\lambda^2-1
    =1,
\end{align*}
contradicting the complementarity identity $w_{b,t+1}+w_{t+1,b}=1$.

All three configurations are therefore impossible when $\lambda=\sqrt{2}$.
\end{proof}

Together with Lemma~\ref{lem:cyclic-expansion}, the lemma rules out the $F$-cycle forced by a hypothetical global failure of the rule.
We now assemble the lemmas to prove the two theorems stated at the beginning of the section.

\begin{proof}[Proof of Theorem~\ref{thm:path-existence}]
Set $\lambda=\sqrt{2}$ and suppose for contradiction that no candidate is $\lambda$-path-unblanketed.
Lemma~\ref{lem:bad-cycle} produces a directed cycle of $\lambda$-bad pairs.
Fix this cycle and construct the map $F$ by the last-admissible rule above.
Lemma~\ref{lem:last-admissible} produces a directed cycle of $F$ of length at least two.
Relabel that $F$-cycle by $\mathbb Z_r$ so that $F(t)=t+1$, and construct the sets $R_t$ from the reachability of $t+1$ in $G_\lambda(t+1,t)$.
Lemma~\ref{lem:reachability-cut} gives $t+1\in R_t$ and $t\notin R_t$ for every $t\in \mathbb Z_r$.
Lemma~\ref{lem:entrant-growth} supplies the numerical bounds and growth inequalities associated with these sets.

The hypotheses of Lemma~\ref{lem:cyclic-expansion} now hold, so the family $(R_t)_{t\in\mathbb Z_r}$ must contain one of the three configurations: mutual containment, consecutive entrants, or a crossing gap.
Lemma~\ref{lem:sqrt2-exclusion} shows that every configuration is impossible when $\lambda=\sqrt{2}$.
This contradiction proves that at least one $\sqrt{2}$-path-unblanketed candidate exists.
\end{proof}

\begin{proof}[Proof of Theorem~\ref{thm:main-upper-bound}]
The rule uses only the entries of the weighted tournament $W(\sigma)$, so it is a C2 rule.
Theorem~\ref{thm:path-existence} guarantees that at least one candidate passes all the tests, and the fixed tie-breaking order makes the output unique.

The rule can be implemented in polynomial time as follows.
For every ordered pair of distinct candidates $(j,k)$, construct $G_\lambda(j,k)$ and compute the vertices reachable from $j$ by a standard graph search.
There are $O(m^2)$ such graphs, each having $m$ vertices and at most $m^2$ arcs, so all these searches take polynomial time.
Afterwards, for every pair $(i,j)$, scan the candidates $k$ and check admissibility together with the stored reachability result.
Thus, the entire rule runs in polynomial time.

Fix any preference profile $\sigma$, any consistent metric $d\in\mathcal D(\sigma)$, and let $j$ be the candidate returned by the rule.
Let $i$ be a candidate minimizing $\mathrm{SC}(i,d)$. If $i=j$, the ratio is 1. Otherwise, $i\neq j$.
Because $j$ is $\sqrt{2}$-path-unblanketed, there is a $\sqrt{2}$-admissible candidate $k$ for $(i,j)$ that is reachable from $j$ in $G_{\sqrt{2}}(j,k)$.
Lemma~\ref{lem:path-certificate} gives
\begin{align*}
    \mathrm{SC}(j,d)
    &\le (1+2\sqrt{2})\mathrm{SC}(i,d).
\end{align*}
Taking the suprema over $\sigma$ and $d$ in \eqref{eq:distortion}, the worst-case distortion of the rule is at most $1+2\sqrt{2}$.
\end{proof}

We remark that the guarantee in Theorem~\ref{thm:main-upper-bound} is tight for Mechanism~\ref{mech:path-unblanketed} as stated,
already with two candidates. Let \(j\) precede \(i\) in the fixed
tie-breaking order and let
$w_{ij}=\frac{\sqrt{2}}{1+\sqrt{2}}$, $w_{ji}=\frac{1}{1+\sqrt{2}}$.
Both candidates are \(\sqrt{2}\)-path-unblanketed, so the rule selects
\(j\). Construct a consistent line metric $d$ by placing the voters preferring \(i\) at \(i\), and placing the
remaining voters at the midpoint between \(i\) and \(j\), breaking
their distance ties in favor of \(j\). Then
\(
\frac{\operatorname{SC}(j,d)}{\operatorname{SC}(i,d)}
=
1+2\sqrt{2}.
\)
For finite electorates, choose rational weights for which $w_{ij}/w_{ji}$ approaches $\sqrt{2}$ from below; this yields finite profiles whose
distortion approaches \(1+2\sqrt2\).

\section{A Five- and Six-Candidate Improvement}\label{sec:local-bounds}

The tight distortion bound of $3$ for at most four candidates is already known \cite[Theorem~A.2]{charikar2025metric}, so we begin with the first unresolved case of five candidates.
The proof of Theorem~\ref{thm:path-existence} treats cycles of arbitrary length and loses information that can be retained when the number of candidates is small.
An exact computer-assisted verification shows that an improved parameter works for both five and six candidates.
Let $\rho$ be the unique real root greater than one satisfying $\rho^5=\rho+1$; numerically, $\rho=1.1673039\ldots$.

\begin{theorem}\label{thm:six-candidate-bound}
For each $m\in\{5,6\}$, the $\rho$-Path-Unblanketed Set rule is a deterministic C2 rule with distortion at most
    $1+2\rho\approx
    3.3346$.
\end{theorem}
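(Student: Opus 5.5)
Since Lemma~\ref{lem:path-certificate} holds for every $\lambda\ge1$, the distortion bound $1+2\rho$ follows exactly as in the proof of Theorem~\ref{thm:main-upper-bound} once we show that every weighted tournament on $m\in\{5,6\}$ candidates contains a $\rho$-path-unblanketed candidate. The plan is therefore an existence proof by contradiction that reuses the general pipeline. Assume no candidate is $\rho$-path-unblanketed. Lemma~\ref{lem:bad-cycle} gives a cycle of bad pairs. Lemma~\ref{lem:last-admissible} gives an $F$-cycle of length $r$ with $2\le r\le m\le 6$. Lemmas~\ref{lem:reachability-cut} and~\ref{lem:entrant-growth} then give the reachable sets $R_t$ and the linear inequalities relating them. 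The gain over the $\sqrt2$ argument comes from bounded $r$: instead of excluding the three local configurations of Lemma~\ref{lem:cyclic-expansion} by crude bounds, we can check the full combinatorial pattern exactly.

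\textbf{Enumeration step.} For each $r\le 6$ we would enumerate, up to rotation, all families $(R_t)_{t\in\mathbb Z_r}$ with $t+1\in R_t$ and $t\notin R_t$. There are at most $2^{(r-2)r}$ such families, which is manageable. We should also keep the reachability information about non-cycle candidates: with $m\le6$ there are at most $m-r$ of them, and each may lie inside or outside each $\widehat R_t$. Each configuration yields a system of strict linear inequalities in the variables $w_{ab}$ for $a<b$, with $w_{ba}=1-w_{ab}$:
\begin{itemize}
\item cut inequalities $w_{b,t+1}>\rho\,w_{a,t}$ for every $a\in\widehat R_t$ and $b\notin\widehat R_t$;
\item the bounds~\eqref{eq:last-admissible-bounds};
\item the bad-pair inequalities $w_{ij}>\rho w_{ji}$ on the original bad cycle, together with the non-admissibility inequalities that determined $F$.
\end{itemize}
It is probably cleaner to enumerate the whole structure directly: the bad-cycle edges, the function $F$, and the reachable sets for all relevant pairs. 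Each case is then an LP feasibility problem with a strict-inequality slack variable. Infeasibility is certified exactly by a Farkas certificate whose coefficients lie in $\mathbb Q(\rho)$, verified symbolically using $\rho^5=\rho+1$.

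\textbf{Why $\rho^5=\rho+1$.} The threshold suggests that the binding obstruction is a growth chain like Lemma~\ref{lem:entrant-growth} run over five steps. Roughly, a chain $p>\rho^{4}(1-1/\rho)$-type combined with a reverse cut yields $\rho^5\le\rho+1$. I would first reproduce this extremal case by hand to confirm that the certificates tighten exactly at $\rho$. This also explains why $m=6$ does not worsen the bound, since a sixth vertex adds constraints without lengthening the critical chain.

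\textbf{Main obstacle.} The hardest part is making the finite case analysis genuinely complete. The reduction to an $F$-cycle discards information, and it is not obvious that cycle-only data suffice; some cases may need constraints involving candidates off the $F$-cycle, or the original bad cycle, to become infeasible. If the $F$-cycle abstraction alone leaves feasible systems, I would fall back to enumerating, for the full $m$-candidate tournament, every assignment of bad pairs $i(j)$ and every reachability set $\widehat R$ in $G_\rho(j,k)$ for all $(j,k)$, restricted to those consistent with closure under arcs. Each resulting LP would then be certified infeasible. Symmetry reduction under candidate relabeling would keep the search tractable.
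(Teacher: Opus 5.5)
\textbf{There is a genuine gap: the decisive infeasibility at $\lambda=\rho$ is never established, and your primary abstraction does not give it.} Your reduction to existence via Lemma~\ref{lem:path-certificate} is right, and so is the idea of certifying infeasibility with exact Farkas certificates over $\mathbb Q(\rho)$; both match the paper. But the whole content of the theorem is that one specific finite family of strict linear systems is infeasible at $\lambda=\rho$. Your proposal neither fixes that family nor proves it infeasible; it only describes a search whose outcome you are unsure of.

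The $F$-cycle data by itself is satisfiable at $\rho\approx1.1673$ already for $r=4$. By this data I mean the sets $R_t$ with $t+1\in R_t$ and $t\notin R_t$, the cut inequalities of Lemma~\ref{lem:reachability-cut}, and the bounds \eqref{eq:last-admissible-bounds}. Take $R_0=\{1,2\}$, $R_1=\{2,3\}$, $R_2=\{3\}$, $R_3=\{0\}$, with $w_{01}=w_{02}=0.78$, $w_{03}=0.18$, $w_{12}=0.8$, $w_{13}=0.7$ and $w_{23}=0.85$. Every cut inequality holds; the tightest are $0.18>\rho\cdot0.15$, $0.22>\rho\cdot0.18$ and $0.3>\rho\cdot0.22$. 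Moreover $y_t\in\{0.78,0.8,0.85,0.82\}\subset(\rho/(1+\rho),1/\rho)\approx(0.539,0.857)$.

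What excludes such a configuration is information that $F$ discards, because $F$ keeps only one reachability cut per vertex. With these weights, a bad cycle on exactly these four vertices would have to be $0\to1\to2\to3\to0$, since that is the only Hamiltonian cycle with $w_{ij}>\rho w_{ji}$ on every edge. Then $F(t)=t+1$ would require $t$ to be non-admissible for $(t+1,t+2)$, i.e.\ $y_{t+1}>\rho y_t$ all the way around, which is impossible.

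The missing idea is Lemma~\ref{lem:projected-closed-sets}. Stay on the bad cycle $Q$ itself. For \emph{every} bad edge $(i,j)$ on it and \emph{every} proxy $k\in Q\setminus\{j\}$, record either $w_{ij}>\rho w_{ki}$, or a set $S_{i,j,k}\subseteq Q$ that contains $j$, excludes $k$, and is closed relative to $Q$ (the reachable set intersected with $Q$). Off-cycle candidates can be dropped because closure survives this projection. The resulting witness patterns therefore depend only on $q\le 6$, and Lemma~\ref{lem:exact-six-cycle-exclusion} eliminates all of them by a completed branch-and-prune search with exact certificates.

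Your fallback, which enumerates bad pairs and reachable sets for all $(j,k)$ in the full $m$-candidate tournament, is logically complete. However, it is a far larger search that you neither run nor show to terminate. Your heuristics are also not arguments: the origin of $\rho^5=\rho+1$ is guessed, and a sixth vertex does not merely ``add constraints without lengthening the critical chain.'' Six-cycles are new configurations with extra freedom and must be excluded on their own. As it stands, the existence step, and hence the distortion bound, is not proved.
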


The following lemma reduces the obstruction to a finite system involving only the candidates on a bad cycle.

\begin{lemma}\label{lem:projected-closed-sets}
Suppose that the bad-pair relation contains a directed cycle on a vertex set $Q\subseteq C$.
For every bad pair $(i,j)$ on this cycle and every $k\in Q\setminus\{j\}$, at least one of the following statements holds:
\begin{enumerate}
    \item $k\ne i$ and $w_{ij}>\lambda w_{ki}$, so $k$ is not admissible for $(i,j)$.
    \item There is a set $S_{i,j,k}\subseteq Q$ such that $j\in S_{i,j,k}$, $k\notin S_{i,j,k}$, and $w_{bj}>\lambda w_{ak}$ for every $a\in S_{i,j,k}$ and $b\in Q\setminus S_{i,j,k}$.
\end{enumerate}
\end{lemma}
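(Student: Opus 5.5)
The plan is a direct case split on whether $k$ is $\lambda$-admissible for the bad pair $(i,j)$. Fix a bad pair $(i,j)$ on the cycle and a candidate $k\in Q\setminus\{j\}$.

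Suppose first that $k$ is not admissible for $(i,j)$. Since $k\ne j$ by assumption, the definition of admissibility leaves only one way for it to fail: $k\ne i$ and $w_{ij}>\lambda w_{ki}$. This is exactly statement~1, so nothing more is needed in this case.

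Suppose instead that $k$ is admissible. Because $(i,j)$ is a bad pair, $k$ is not reachable from $j$ in $G_\lambda(j,k)$. Let $\widehat S$ be the set of all candidates in $C$ reachable from $j$ in $G_\lambda(j,k)$, and set $S_{i,j,k}=\widehat S\cap Q$. This mirrors the construction of $R_t$ from $\widehat R_t$ before Lemma~\ref{lem:reachability-cut}. The required properties are then checked one at a time:
\begin{enumerate}
    \item $j\in S_{i,j,k}$, because $j$ is reachable from itself by the trivial path and $j\in Q$ as an endpoint of a cycle edge.
    \item $k\notin S_{i,j,k}$, because $k\notin\widehat S$.
    \item The cut inequality: take $a\in S_{i,j,k}$ and $b\in Q\setminus S_{i,j,k}$. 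Since $b\in Q$ and $b\notin S_{i,j,k}$, we get $b\notin\widehat S$. If $w_{bj}\le\lambda w_{ak}$ held, the arc $a\to b$ would lie in $G_\lambda(j,k)$. Appending it to a path from $j$ to $a$ would make $b$ reachable, a contradiction. Hence $w_{bj}>\lambda w_{ak}$, which is statement~2.
\end{enumerate}

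There is no real obstacle here; the argument is the same path--cut step as in Lemma~\ref{lem:reachability-cut}. The one point needing care is the restriction to $Q$. Reachability must be computed in the full graph on $C$, since paths may pass through candidates outside the cycle. The intersection with $Q$ is taken only afterwards, and that is why the cut inequality is claimed only for pairs $a,b$ inside $Q$. Intersecting does not break the argument: for $b\in Q$, not being in the intersected set is equivalent to not being in $\widehat S$.
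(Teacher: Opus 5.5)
Your proposal is correct and follows the paper's proof essentially verbatim: the same case split on admissibility, the same choice $S_{i,j,k}=\widehat S\cap Q$ with reachability computed in the full graph $G_\lambda(j,k)$, and the same contradiction from appending the arc $a\to b$ to a path from $j$ to $a$. Your explicit remark that $j\in Q$ (because $j$ is an endpoint of a cycle edge) fills in a detail the paper leaves implicit.
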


\begin{proof}
Fix $(i,j)$ and $k$.
If $k$ is not admissible, then $k\ne i$ and the negation of the admissibility inequality gives $w_{ij}>\lambda w_{ki}$, so the first alternative holds.
Suppose instead that $k$ is admissible.
Because $(i,j)$ is bad, $k$ is not reachable from $j$ in $G_\lambda(j,k)$.
Let $\widehat S$ be the set of all candidates reachable from $j$ in this graph, and set $S_{i,j,k}=\widehat S\cap Q$.
Then $j\in S_{i,j,k}$ and $k\notin S_{i,j,k}$.
If some $a\in S_{i,j,k}$ and $b\in Q\setminus S_{i,j,k}$ satisfied $w_{bj}\le\lambda w_{ak}$, the graph would contain the arc $a\to b$.
A path from $j$ to $a$ followed by this arc would make $b$ reachable; since $b\in Q$, this would imply $b\in\widehat S\cap Q=S_{i,j,k}$, contradicting $b\notin S_{i,j,k}$.
Thus, the second alternative holds.
\end{proof}

Hence, Lemma~\ref{lem:projected-closed-sets} says that for every bad pair and every proposed proxy on the cycle, either the proxy fails the initial admissibility test, or there is a cut separating \(j\) from \(k\), with every potential crossing arc blocked by a strict inequality. The set $S_{i,j,k}$ is closed relative to the cycle: no graph arc goes from it to a vertex of $Q\setminus S_{i,j,k}$. However, the following lemma excludes a bad cycle of small length.

\begin{lemma}\label{lem:exact-six-cycle-exclusion}
For $\lambda=\rho$, the alternatives in
Lemma~\ref{lem:projected-closed-sets} cannot hold simultaneously over all $i,j,k$ on a bad
cycle of any length $q\in\{2,3,4,5,6\}$.
\end{lemma}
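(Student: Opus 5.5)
The plan is to reduce the statement to finitely many linear feasibility problems and certify each one infeasible by exact rational arithmetic. The reduction and the certificates are described below, followed by the expected main obstacle.

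\textbf{The finite system.} Fix $q\in\{2,\dots,6\}$ and label the cycle $Q=\{c_0,\dots,c_{q-1}\}$ with bad pairs $(c_t,c_{t+1})$. The unknowns are the weights $w_{ab}$ for $a,b\in Q$, subject to $w_{ab}+w_{ba}=1$ and $0\le w_{ab}\le 1$. Only weights inside $Q$ appear in Lemma~\ref{lem:projected-closed-sets}, so this system is self-contained. For each bad pair $(i,j)$ and each $k\in Q\setminus\{j\}$, the lemma gives a disjunction:
\begin{itemize}
\item[] either the strict inequality $w_{ij}>\rho w_{ki}$ (available only if $k\ne i$),
\item[] or a choice of set $S\subseteq Q$ with $j\in S$, $k\notin S$, together with the strict inequalities $w_{bj}>\rho w_{ak}$ for all $a\in S$ and $b\notin S$.
\end{itemize}
The case $k=i$ always forces the second alternative. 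In particular the singleton choice $S=\{j\}$, $b=i$ recovers $w_{ij}>\rho w_{ji}$ (compare Lemma~\ref{lem:bad-cycle}).

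\textbf{Enumeration and pruning.} A full selection of alternatives is a finite combinatorial object. For each selection we must show that the resulting system of strict linear inequalities with coefficients in $\mathbb{Q}(\rho)$ has no solution. I would enumerate selections by depth-first search over the triples $(i,j,k)$ and prune aggressively:
\begin{itemize}
\item[] quotient by the cyclic rotation symmetry of the labelling;
\item[] check feasibility of partial systems early, so a branch is abandoned as soon as it becomes infeasible;
\item[] use the structural facts already proved, namely the $F$-map bounds of Lemma~\ref{lem:last-admissible} and the three configurations of Lemma~\ref{lem:cyclic-expansion}, as derived constraints that cut branches early.
\end{itemize}
For $q\le 4$, and plausibly $q=5$, I expect short hand proofs along the lines of Lemma~\ref{lem:sqrt2-exclusion}, but I would let the computer handle every case uniformly.

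\textbf{Infeasibility certificates.} Each leaf system is strict, so infeasibility is certified by Motzkin's transposition theorem. Concretely, I would exhibit nonnegative multipliers, not all zero on the strict rows, that combine the inequalities and the complementarity identities into $0>0$ or $0>c$ with $c\ge0$. The multipliers would be found by an LP solver in floating point and then rationalized. Each coefficient is an element $\alpha+\beta\rho+\dots$ of $\mathbb{Q}(\rho)$, and its sign must be checked exactly. I would do this by reducing polynomials in $\rho$ modulo $\rho^5-\rho-1$ and comparing them using rigorous rational interval bounds on $\rho$.

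The choice $\rho^5=\rho+1$ suggests that the tight leaves chain five multiplicative factors of $\rho$ around a 5-cycle, in analogy with the growth argument of Lemma~\ref{lem:entrant-growth}. Many certificates should therefore be tight at exactly this threshold. This is why exact arithmetic in the number field is needed rather than floating point.

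\textbf{Main obstacle.} The hardest step is the combinatorial explosion at $q=6$. There are up to $30$ triples, each with up to $2^4$ candidate sets $S$ plus the admissibility alternative, so naive enumeration is hopeless. I would first restrict $S$ to sets that are upward closed under forced reachability. For example, if the arc $a\to b$ is already implied by constraints fixed so far, then $a\in S$ forces $b\in S$. I would also fix early the orientations of $w_{ab}$ relative to thresholds such as $1/2$, $\rho/(1+\rho)$ and $1/\rho$. This case split lets most strict inequalities be decided combinatorially before any LP call, which should shrink the search tree to a size that can be certified.
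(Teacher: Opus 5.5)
Your plan matches the paper's proof. It enumerates witness patterns (one alternative per bad-pair/proxy triple) and prunes partial nodes by exact infeasibility certificates, with multipliers suggested by a floating-point LP and then checked in $\mathbb{Q}(\rho)$ via reduction modulo $\rho^5-\rho-1$ and rational interval bounds; your Motzkin-type certificate is the dual form of the paper's common-margin certificate $\sum_e\alpha_e(g_e-\delta)+\dots=-c\delta-\eta$ with $c>0$, and your extra accelerations (rotation symmetry, derived constraints from the $F$-map and cyclic-expansion lemmas, upward-closed $S$, threshold case splits) are sound optional refinements in place of the paper's dominance reduction.
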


\begin{proof}
We give a computer-assisted proof using exact arithmetic.
After cyclic relabeling, fix the bad pairs on the cycle as $(q-1,0),(0,1),\ldots,(q-2,q-1)$ and retain the $\binom q2$ independent variables $w_{ab}$ with $a<b$, substituting $w_{ba}=1-w_{ab}$.
For each bad pair $(i,j)$ and every proxy $k\ne j$ on the cycle, consider every possible selection of one alternative guaranteed by Lemma~\ref{lem:projected-closed-sets}: there is one nonadmissibility alternative when $k\ne i$ and $2^{q-2}$ possible closed-set alternatives. 
Call each resulting finite collection of alternatives a \emph{witness pattern}, where every selected alternative contributes strict comparisons of the form $w_{xy}>\rho w_{uv}$. Every bad cycle admits at least one valid witness pattern for which the cycle weights $w_{ab}$ satisfy all
comparisons.

For any fixed witness pattern, let $\mathcal E$ be the resulting finite set of comparisons. Replacing all the strict inequalities by inequalities having one common margin, we consider the linear program
\begin{align}
    \max\quad &\delta
    \notag\\
    \text{subject to}\quad
    &w_{xy}-\rho w_{uv}\ge\delta
    &&\text{for every }((u,v),(x,y))\in\mathcal E,
    \notag\\
    &0\le w_{ab}\le1
    &&\text{for every }a<b.
\end{align}
If the optimum satisfies \(\delta>0\), all the original strict inequalities can hold simultaneously.
If the optimum satisfies \(\delta\leq0\), that witness pattern is impossible. This equivalence works because each witness pattern contains only finitely many inequalities: if all are strict, their smallest positive slack can be used as \(\delta\). 

An exhaustive computer-assisted verification examines every witness pattern for every
\(
q\in\{2,3,4,5,6\}
\) and proves that its optimum is at most zero. Appendix~\ref{app:exact-verification} gives the exact finite verification. Therefore, no witness pattern is possible, and the alternatives in Lemma~\ref{lem:projected-closed-sets} cannot
hold simultaneously over all bad pairs $(i,j)$ and proxies $k\neq j$ on a bad cycle. 
\end{proof}

\begin{proof}[Proof of Theorem~\ref{thm:six-candidate-bound}]
Fix $m\in\{5,6\}$ and suppose for contradiction that a weighted tournament on $m$ candidates has no $\rho$-path-unblanketed candidate.
Lemma~\ref{lem:bad-cycle} then gives a directed cycle of bad pairs of some length $q\in\{2,\ldots,m\}$.
Applying Lemma~\ref{lem:projected-closed-sets} to this cycle produces one of its two alternatives for every bad pair and every proxy on the cycle.
This contradicts Lemma~\ref{lem:exact-six-cycle-exclusion}, so a $\rho$-path-unblanketed candidate $j$ always exists.
Use the Path-Unblanketed Set rule with parameter $\lambda=\rho$ and the same fixed tie-breaking order as before. The preceding existence argument makes the rule well defined, and its tests use only the weighted tournament.
For any consistent metric $d$ and any socially optimal candidate $i$, the selected candidate $j$ has a reachable $\rho$-admissible proxy for $(i,j)$, and
Lemma~\ref{lem:path-certificate} therefore gives $\mathrm{SC}(j,d)\le(1+2\rho)\mathrm{SC}(i,d)$, proving the theorem.
\end{proof}

Together with the lower bound of Charikar et al.~\cite{charikar2025metric}, Theorem~\ref{thm:six-candidate-bound} gives
\begin{align*}
    3.1128773153\ldots
    \le D_5^{\mathrm{C2}}
    \le3.3346079565\ldots.
\end{align*}
For six candidates, the same theorem gives $D_6^{\mathrm{C2}}\le3.3346079565\ldots$.

\section{Conclusion}\label{sec:conclusion}

We introduced the $\lambda$-Path-Unblanketed Set rule, a polynomial-time deterministic C2 rule that organizes pairwise-frequency inequalities through reachability in proxy-dependent directed graphs.
We  proved that some candidate satisfies all of the proxy-dependent reachability tests and thus the rule is well defined. 
It gives distortion at most $1+2\sqrt{2}\approx3.8284$ for any finite number of candidates, improving the previous upper bound of $3.9312$.
For elections with no more than six candidates, an exact computer-assisted verification  yields the stronger bound $3.3346$.
On the lower-bound side, building on the construction of Charikar et al.~\cite{charikar2025metric}, we slightly improve their lower bound of $3.1128$.

\begin{proposition}\label{prop:lb}
The distortion of every deterministic C2 rule is at least \(3.1828\) for any $m\ge 17$.
\end{proposition}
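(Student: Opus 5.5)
The plan follows the lower-bound framework of Charikar et al.~\cite{charikar2025metric}. The certificate is a single weighted tournament $W$ on $17$ candidates, together with, for every candidate $j$, a preference profile $\sigma_j$ realizing $W$ and a consistent metric $d_j\in\mathcal D(\sigma_j)$ with $\mathrm{SC}(j,d_j)\ge 3.1828\,\min_c \mathrm{SC}(c,d_j)$. This suffices for the following reason. A deterministic C2 rule sees only $W$, so on every $\sigma_j$ it outputs the same candidate $f(W)$. Taking $j=f(W)$, the profile $\sigma_{f(W)}$ with metric $d_{f(W)}$ witnesses distortion at least $3.1828$. For $m>17$, I would append extra candidates placed far from all voters and ranked last by everyone, identically across profiles. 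These candidates never help the rule, since choosing one has huge cost. They also do not change the optimum, and they do not break consistency or the fact that all the $\sigma_j$ induce a common weighted tournament.

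The first step is to recall the realizability tools of Charikar et al. For a fixed target pair (the chosen candidate $j$ and an intended optimum $i$), they characterize the worst-case ratio $\mathrm{SC}(j,d)/\mathrm{SC}(i,d)$ over profiles inducing $W$ through a biased-metric construction. Voters are placed on a structure rooted at $i$, and the ratio is governed by cut quantities like those in \eqref{eq:path-cut-certificate}. Conversely, any $W$ violating the cut inequality by enough yields a profile and a metric with large ratio. I would state this converse in a form whose inputs are only the entries $w_{ab}$, a choice of $i$ for each $j$, and a partition (or a nested chain of sets) with explicit voter masses. Under this form, checking each $j$ reduces to verifying finitely many linear identities and inequalities in the rational entries of $W$.

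Next, I would construct $W$ by an LP/search procedure. Fix a combinatorial skeleton, for instance a symmetric cyclic structure on $\mathbb Z_{17}$ extending the $5$-candidate construction. For each $j$, designate a witness optimum $i(j)$ and a profile shape: which ranking types appear and which distance layers voters occupy. The weights and voter masses then enter linearly, while the target ratio enters through a linear-fractional constraint. Bisecting on the ratio and solving the resulting LPs gives numerical weights. I would then round these to rationals and recheck everything in exact arithmetic: that every $\sigma_j$ induces exactly $W$, that pairwise frequencies are complementary, that each $d_j$ satisfies the triangle inequality and consistency with weak inequalities, and that the ratio is at least $3.1828$. These computations belong in the appendix.

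I expect the main obstacle to be simultaneous realizability. Each profile $\sigma_j$ must reproduce all $\binom{17}{2}$ pairwise frequencies exactly while also concentrating its voters to make $j$ bad. The design freedom therefore lives in how the ranking types are mixed. I would handle this by letting each $\sigma_j$ consist of a small gadget of voters responsible for the bad ratio, plus a "filler" population of rankings chosen by an LP to correct all remaining pairwise frequencies. The filler voters would be placed at positions that keep the metric consistent without lowering the ratio much, for example co-located with $i(j)$ whenever their rankings permit. Then I would verify exact feasibility of these filler LPs for the rounded $W$.
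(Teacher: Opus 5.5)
Your reduction is the right one. A single weighted tournament $W$, together with a realizing profile and a consistent metric making $j$ bad for each possible output $j$, is exactly the paper's common-tournament certificate (Lemma~\ref{lem:common-tournament-lower-certificate}). Padding with far-away candidates that every voter ranks last is also a valid, simpler substitute for the paper's cloning step when $m>17$: if the rule outputs a padded candidate, its cost can be made arbitrarily large.

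The gap is that the proposition is a concrete numerical claim whose proof \emph{is} a 17-candidate certificate, and you never produce one. You describe a search---pick a skeleton, bisect on the ratio, solve LPs, round---but nothing shows that this search reaches $3.1828$; the constant is never pinned down. The missing idea is the structure the paper actually certifies. On $\mathbb Z_{17}$ it takes comparator $i_j=j-1$ and three-level biased metrics with $x^{(j)}_{j-1}=0$, $x^{(j)}_k=1$ for $k\in\{j,\dots,j+r_j\}$, and $\frac12$ elsewhere, where $(r_j)=(0,1,15,15,15,15,15,12,15,8,7,6,5,4,3,2,1)$. The entries $r_j=15$ are simple biased metrics. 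The shorter intervals are reset metrics that return pairwise mass accumulated by consecutive simple steps to level $\frac12$; this is what lets the cyclic chain close while all 17 distributions share the same 136 marginals. The resulting LP face is solved exactly in $\mathbb Q(\tau)$ with $3\tau^3+\tau^2-\tau-4=0$. This gives 1810 strictly positive ranking weights, identical marginals, and $\mathbb E[A_j]\ge\tau\,\mathbb E[B_j]$ for every $j$, all checked without floating point. Your suggested ``symmetric cyclic structure extending the 5-candidate construction'' is not this: the working chain is deliberately not rotation-invariant, and you give no reason a symmetric skeleton would reach $3.1828$.

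Two intermediate steps also do not work as stated.
\begin{enumerate}
\item The ``converse'' of the cut inequality \eqref{eq:path-cut-certificate}, phrased in terms of the entries $w_{ab}$ alone, is not an available tool. That inequality is only a sufficient condition on the upper-bound side. Whether output $j$ can be attacked depends on the correlations inside the realizing profile, which is exactly the information a C2 rule discards and the adversary exploits. So the per-$j$ check cannot reduce to linear conditions on $W$. It must be done on explicit distributions over full rankings, using the biased-metric identities $\mathrm{SC}(j)-\mathrm{SC}(i_j)=\mathbb E[A_j]$ and $2\,\mathrm{SC}(i_j)=\mathbb E[B_j]$. These make the attack condition linear in the ranking weights once the coordinates $x^{(j)}$ are fixed.
\item Rounding numerically found weights generically destroys exact equality of marginals across the 17 profiles. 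Re-solving exactly for a rounded $W$ can only succeed if the numerical point has slack. The paper secures this by starting from an exact, strictly positive point of the common-marginal affine space $H$ at $\lambda=\tau$, then using density of rational points in $H$ for each $\lambda<\tau$.
\end{enumerate}
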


The gap of deterministic C2 rules between our lower bound of $3.1828$ and our upper bound of $3.8284$ remains substantial.
Our metric-side analysis of the distortion guarantee is inherited from prior work, and therefore one direction is to develop sharper metric-side arguments, possibly by combining information from several proxies without repeated approximation loss.
Another is to exploit additional constraints satisfied by pairwise frequencies induced by the preference profile, whereas
our existence proof uses only pairwise complementarity.
For a bounded number of candidates, stronger structural reductions and more scalable exact computational searches may extend the bounds to $m\ge7$ or even determine the exact values of $D_m^{\mathrm{C2}}$ for small $m$.

\section*{AI Use Statement}
We used OpenAI GPT-based tools, including ChatGPT and Codex, to assist in developing and checking several intermediate lemmas and their proofs. For the computer-assisted bounded-candidate upper bounds and lower bound, these tools played a primary role in designing the search and certificate procedures and implementing the verification code. The authors independently validated all AI-assisted outputs, and take full responsibility for the paper's content.


\bibliographystyle{plain}
\bibliography{mybibfile}

\appendix
\section{Exact Verification of
Lemma~\ref{lem:exact-six-cycle-exclusion}}
\label{app:exact-verification}

This appendix gives the computer-assisted part of
Lemma~\ref{lem:exact-six-cycle-exclusion}.  The computation does not enumerate
numerical values of the pairwise weights.  Instead, it enumerates the
finitely many combinatorial explanations for why every proxy fails and uses
an exact linear certificate to exclude all real-valued weight assignments
compatible with each explanation.

\subsection{Witness patterns}

Fix a cycle length $q$ and identify its vertex set with
$Q_q=\{0,1,\ldots,q-1\}$.  After cyclic relabeling, the bad edges are
$
    (q-1,0),(0,1),\ldots,(q-2,q-1).$
For every bad edge $(i,j)$ and every proxy $k\in Q_q\setminus\{j\}$,
Lemma~\ref{lem:projected-closed-sets} supplies at least one of the following
alternatives.

\begin{enumerate}
    \item If $k\ne i$, the \emph{nonadmissibility alternative} consists of
    the single comparison
    \begin{align}
        w_{ij}>\rho w_{ki}.
        \label{eq:appendix-nonadmissible}
    \end{align}

    \item For every set $S\subseteq Q_q$ with $j\in S$ and $k\notin S$,
    the \emph{closed-set alternative indexed by $S$} consists of all the
    comparisons
    \begin{align}
        w_{bj}>\rho w_{ak}
        \qquad
        (a\in S,\ b\in Q_q\setminus S).
        \label{eq:appendix-closed-set}
    \end{align}
\end{enumerate}

There are $2^{q-2}$ possible closed sets because membership is fixed for
$j$ and $k$, while each of the remaining $q-2$ candidates may be placed on
either side.  A \emph{witness pattern} chooses one available alternative for
every triple $(i,j,k)$ consisting of a bad edge and a proxy $k\ne j$.
The comparisons in the chosen alternatives are understood conjunctively:
choosing a closed set requires every comparison in
\eqref{eq:appendix-closed-set}, not merely one of them.

This branching is logically exhaustive.  Given an actual bad cycle, choose
the nonadmissibility alternative whenever the proxy is not admissible; if
the proxy is admissible, choose the projected reachable set constructed in
Lemma~\ref{lem:projected-closed-sets}.  The actual weights then satisfy all
comparisons in the resulting witness pattern.  Therefore, to rule out bad
cycles, it is enough to prove that every witness pattern is infeasible.

Before applying the dominance reduction described in Section \ref{app:a2}, the number of patterns is at most
\begin{align}
    \left[
        2^{q-2}\bigl(1+2^{q-2}\bigr)^{q-2}
    \right]^q.
    \label{eq:branch-count}
\end{align}
Indeed, on each bad edge the self-proxy $k=i$ has only the $2^{q-2}$
closed-set choices, whereas each of the other $q-2$ proxies has one
nonadmissibility choice and $2^{q-2}$ closed-set choices.  The verifier does
not construct all leaves explicitly; it prunes an entire subtree as soon as
the alternatives already selected are infeasible.

\subsection{The common-margin test}\label{app:a2}

Retain the $\binom q2$ independent variables $w_{ab}$ with $a<b$ and
substitute $w_{ba}=1-w_{ab}$ whenever a comparison uses the reverse
orientation.  Thus, every selected comparison becomes a strict affine
inequality in variables belonging to the box $[0,1]^{\binom q2}$.

Let $\mathcal E$ be the finite set of comparisons selected at a node of the
branching tree.  Its common-margin program is
\begin{align}
    \max\quad &\delta
    \notag\\
    \text{subject to}\quad
    &w_{xy}-\rho w_{uv}\ge\delta
    &&\text{for every }((u,v),(x,y))\in\mathcal E,
    \notag\\
    &0\le w_{ab}\le1
    &&\text{for every }a<b.
    \label{eq:local-margin-lp}
\end{align}
All reverse weights in this display are understood to have been replaced by
$1-w_{ab}$.  The program ranges over every real-valued assignment in the
box; it does not discretize or enumerate possible values of $w_{ab}$.

The strict comparisons in $\mathcal E$ are simultaneously feasible if and
only if the optimum of \eqref{eq:local-margin-lp} is positive.  One direction
is immediate.  Conversely, if all comparisons are strict at some weight
assignment, the minimum of their finitely many positive slacks is a feasible
positive value of $\delta$.  It follows that a node can be pruned once an
exact certificate proves that the optimum is at most zero.  Every extension
of that node contains all its comparisons, so it is infeasible as well.

There is also a simple dominance reduction.  Suppose two alternatives for
the same triple have comparison sets $\mathcal E_1\subseteq\mathcal E_2$.
The second alternative is stronger: every assignment satisfying it also
satisfies the first.  It is therefore sufficient to retain the weaker
alternative when testing the disjunction, because excluding the weaker one
automatically excludes the stronger one.

\subsection{Exact certificates}

We describe the certificate in a form that makes its implication transparent.
After complementarity has been substituted, write a selected comparison as
$g_e(w)-\delta\ge0$, where $g_e$ is affine in the independent weights.  The
bounds are written as $w_{ab}\ge0$ and $1-w_{ab}\ge0$.  A pruning
certificate is a collection of nonnegative multipliers
$\alpha_e,\beta_{ab},\gamma_{ab}$ for which the exact identity
\begin{align}
    &\sum_{e\in\mathcal E}\alpha_e\bigl(g_e(w)-\delta\bigr)
      +\sum_{a<b}\beta_{ab}w_{ab}
      +\sum_{a<b}\gamma_{ab}(1-w_{ab})
      =-c\delta-\eta
    \label{eq:exact-certificate}
\end{align}
holds, where $c>0$ and $\eta\ge0$.  The cancellation of every weight
coefficient in this identity is the dual stationarity condition.  At any
feasible point, every term on the left is nonnegative.  Hence
$-c\delta-\eta\ge0$, which implies $\delta\le-\eta/c\le0$.  Thus
\eqref{eq:exact-certificate} is an independently checkable proof that the
node and all its descendants are infeasible.

All entries in a certificate belong to $\mathbb Q(\rho)$.  They are stored
exactly as
\begin{align*}
    a_0+a_1\rho+a_2\rho^2+a_3\rho^3+a_4\rho^4,
    \qquad a_0,\ldots,a_4\in\mathbb Q.
\end{align*}
Products are reduced using $\rho^5=\rho+1$; for example,
$\rho^6=\rho^2+\rho$.  Equality is checked coefficientwise after this
reduction.  Signs are checked using rational interval arithmetic and the
isolating interval
\begin{align}
    \frac{1167303978}{10^9}
    <\rho<
    \frac{1167303979}{10^9}.
    \label{eq:rho-isolating-interval}
\end{align}
For the certificates produced by the verification, this interval separates
every nonzero quantity whose sign is queried from zero.  A zero quantity is
recognized from its reduced coefficient vector.

A floating-point LP solver is used only to suggest the \emph{support} of a
certificate, namely the constraints whose multipliers may be nonzero.  The
candidate multipliers are then reconstructed in $\mathbb Q(\rho)$, and the
verifier checks their signs and identity \eqref{eq:exact-certificate}
exactly.  A floating-point objective value, by itself, never justifies
pruning a node.

\subsection{Examples for a three-cycle}

Let $q=3$, with bad edges $(2,0),(0,1),(1,2)$, and abbreviate
\begin{align*}
    x=w_{01},\qquad y=w_{12},\qquad z=w_{20}.
\end{align*}
Then $w_{10}=1-x$, $w_{21}=1-y$, and $w_{02}=1-z$.
For each bad edge, the self-proxy has two closed-set alternatives, while the
third candidate has one nonadmissibility alternative and two closed-set
alternatives.  Formula~\eqref{eq:branch-count} therefore gives $6^3=216$
patterns before pruning.

\paragraph{A single alternative does not suffice.}
For $(i,j,k)=(2,0,1)$, the nonadmissibility alternative is
\begin{align*}
    z-\rho y\ge\delta.
\end{align*}
Considered alone with $0\le x,y,z\le1$, its common-margin optimum is $1$,
attained by $z=1$ and $y=0$.  This partial node cannot be pruned.  The
contradiction comes from coordinating alternatives across the entire cycle,
not from any one comparison.

\paragraph{Three nonadmissibility alternatives.}
Suppose the non-self proxy uses the nonadmissibility alternative on each bad
edge.  The selected comparisons include
\begin{align*}
    z-\rho y&\ge\delta,
    &x-\rho z&\ge\delta,
    &y-\rho x&\ge\delta.
\end{align*}
Adding these inequalities and then adding $(\rho-1)$ times each valid bound
$x\ge0$, $y\ge0$, and $z\ge0$ gives the exact identity
\begin{align*}
    &(z-\rho y-\delta)
      +(x-\rho z-\delta)
      +(y-\rho x-\delta)
      +(\rho-1)(x+y+z)
      =-3\delta.
\end{align*}
All multipliers are nonnegative because $\rho>1$.  This is a certificate of
the form \eqref{eq:exact-certificate} with $c=3$ and $\eta=0$, so the partial
node is infeasible with positive margin.  Every choice of the still-missing
self-proxy alternatives can be pruned at once.

\paragraph{A closed-set contradiction.}
Consider the pattern that chooses the smallest closed set $S=\{j\}$ for
every proxy.  Two comparisons already suffice.  From the triple
$(i,j,k)=(2,0,1)$ we obtain
\begin{align*}
    (1-x)-\rho x\ge\delta,
\end{align*}
and from $(i,j,k)=(0,1,0)$ we obtain
\begin{align*}
    x-\rho(1-x)\ge\delta.
\end{align*}
Adding them gives
\begin{align*}
    (1-x-\rho x-\delta)
    +(x-\rho(1-x)-\delta)
    =-2\delta-(\rho-1).
\end{align*}
This is an exact certificate with $c=2$ and $\eta=\rho-1>0$; it proves the
stronger bound $\delta\le(1-\rho)/2<0$.  The remaining comparisons in the
pattern need not be inspected.

\subsection{Completion of the exhaustive check}

For each $q\in\{2,3,4,5,6\}$, the verifier traverses the finite branching
tree of witness patterns described above.  A node is accepted as excluded
only after an exact certificate of the form
\eqref{eq:exact-certificate} has been checked.  Otherwise the verifier adds
the alternatives for the next bad-edge/proxy triple and examines every
nondominated child.  The computation terminates with every leaf either
pruned directly or descending from a certified pruned node.  Hence every
witness pattern has common-margin optimum at most zero.  By the strict
feasibility equivalence above, no witness pattern can be induced by a bad
cycle.  This completes the exact finite verification used in
Lemma~\ref{lem:exact-six-cycle-exclusion}.

\section{Lower Bound}
\label{app:exact-lower-bound}

This section gives a computer-assisted lower bound for deterministic C2
rules.  The search that found the construction used linear programming, but
the final result is certified independently using exact arithmetic.  
Let $\tau$ be the unique real root greater than one of
\begin{align}
    3\tau^3+\tau^2-\tau-4=0.
    \label{eq:lower-tau-polynomial}
\end{align}
Numerically, $\tau=1.091414260213383994\ldots$. We prove the following lower bound result. 

\begin{proposition}[Restatement of Proposition \ref{prop:lb}]
\label{thm:exact-c2-lower-bound}
For every $m\ge17$,
\begin{align}
    D_m^{\mathrm{C2}}
    \ge 1+2\tau
    =3.182828520426767989\ldots.
    \label{eq:exact-c2-lower-bound}
\end{align}
\end{proposition}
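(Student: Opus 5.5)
The plan follows the lower-bound template of Charikar et al.~\cite{charikar2025metric}. We exhibit one weighted tournament $W$ on $17$ candidates, with entries in $\mathbb Q(\tau)$, and show that whatever candidate $j$ a deterministic C2 rule selects on $W$, there is a profile $\sigma_j$ with $W(\sigma_j)=W$ and a consistent metric $d_j\in\mathcal D(\sigma_j)$ such that $\mathrm{SC}(j,d_j)\ge(1+2\tau)\,\mathrm{SC}(i_j,d_j)$ for some other candidate $i_j$. Since a C2 rule must output the same $j$ for all profiles inducing $W$, this forces distortion at least $1+2\tau$.

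For $m>17$, we pad the instance with dummy candidates that every voter ranks last, placed far from everyone in the metric. They are never optimal. If the rule selects a dummy, its social cost is arbitrarily large, so the distortion is unbounded. This gives the bound for all $m\ge17$.

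The key steps, in order, are as follows.

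\textbf{Step 1: the per-candidate hard instance.} For each candidate $j$, we use the standard two-point/biased-metric construction underlying Corollary~4.3 of \cite{charikar2025metric}, run in reverse. A metric is built from a set of voter types, each a ranking together with a mass. We place voters near $i_j$ or at appropriate midpoints so that the cost ratio of $j$ versus $i_j$ is governed by the weights $w_{i_jj}$, $w_{ki_j}$, and the cut quantities of \eqref{eq:path-cut-certificate}. Concretely, we need a cut $I\sqcup J$ and a proxy structure certifying that \emph{every} potential proxy fails. Equivalently, $j$ is far from being ``path-unblanketed'' in the tight sense, with ratio exactly $\tau$ in place of $\lambda$.

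\textbf{Step 2: realizability.} The voter types for different $j$ give different profiles $\sigma_j$. Each must induce exactly $W$. So for each $j$ we solve a linear feasibility problem: find nonnegative masses on rankings whose pairwise marginals equal $W$, using the prescribed rankings (those consistent with $d_j$) on the relevant voters and arbitrary consistent rankings elsewhere. We state the masses explicitly in $\mathbb Q(\tau)$ and verify the marginals and nonnegativity exactly. Because only rational multiples of the weights are needed, finite electorates can approximate them, with distortion approaching $1+2\tau$; the infimum $D_m^{\mathrm{C2}}$ is therefore still bounded below.

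\textbf{Step 3: exact verification.} We verify triangle inequalities, consistency of $d_j$ with $\sigma_j$, and the ratio $\mathrm{SC}(j,d_j)/\mathrm{SC}(i_j,d_j)=1+2\tau$. All of this is done in $\mathbb Q(\tau)$, reducing by $3\tau^3=4+\tau-\tau^2$ and checking signs with a rational isolating interval for $\tau$, exactly as in Appendix~\ref{app:exact-verification}. Symmetry (e.g.\ a cyclic automorphism of $W$) should reduce the $17$ checks to a few orbit representatives.

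The main obstacle is Step~2 combined with choosing $W$. A single tournament must be simultaneously realizable by $17$ different profiles, each tailored to punish one choice, and the metric gadget must make the ratio tight at the algebraic number $\tau$. The defining cubic presumably arises as the balance point where two competing constraints in the LP search become tight. We would find $W$ by LP search (maximizing the achievable ratio as a parameter) and then round to the algebraic optimum. The proof itself then consists only of exhibiting the data and checking it exactly.
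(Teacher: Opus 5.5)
Your overall template is the right one: a single common tournament, plus a separate profile and metric that punishes each possible output. Your padding for $m>17$ is also valid, and simpler than the paper's cloning argument: dummies ranked last by every voter and placed far away leave the 17 attacks unchanged, and give any dummy output an unbounded ratio.

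The gap is that the proposition \emph{is} the existence of a specific certificate, and you never produce one or give a reason one exists at the value $1+2\tau$. Phrases like ``we would find $W$ by LP search'', ``the cubic presumably arises'', and ``a cyclic automorphism should reduce the checks'' establish nothing. The paper's certificate is in fact not cyclically symmetric. It uses the chain $i_j=j-1$ on $\mathbb Z_{17}$ with coordinates $x^{(j)}\in\{0,\tfrac12,1\}^{17}$, ones on an interval of length $r_j$, and reset lengths $r_j<15$ that allow the chain to close realizably. It has 1810 positive ranking weights in $\mathbb Q(\tau)$, all checked exactly.

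Worse, your Step~1 describes the wrong mechanism. Failing the cut condition \eqref{eq:path-cut-certificate}, or failing to be path-unblanketed, is the failure of a \emph{sufficient} condition for the upper bound, and by itself produces no metric. The punishing ratio is not a function of $W$ and a cut. It depends on correlations inside full rankings, which is exactly the information a C2 rule discards.

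The missing reduction is the biased-metric identity. Take a distribution $p^{(j)}$ and a vector $x^{(j)}\ge0$ with $x^{(j)}_{i_j}=0$, and set $A_j(\pi)=\min_{k:\,j\succeq_\pi k}x^{(j)}_k$ and $B_j(\pi)=\max_{a\succeq_\pi b}(x^{(j)}_a-x^{(j)}_b)$. Then $\mathrm{SC}(j)-\mathrm{SC}(i_j)=\mathbb E_{p^{(j)}}[A_j]$ and $2\,\mathrm{SC}(i_j)=\mathbb E_{p^{(j)}}[B_j]$. This turns ``output $j$ has ratio at least $1+2\lambda$'' into $\mathbb E[A_j]\ge\lambda\,\mathbb E[B_j]$, which is linear in the ranking weights. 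Together with the linear common-marginal equations, both the search and the exact verification become linear problems. Without this reduction, your Steps~1--3 have nothing concrete to solve or check.

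Your finite-electorate step also fails as stated, for three reasons. First, a tournament with entries in $\mathbb Q(\tau)\setminus\mathbb Q$ is induced by no finite profile, so $W(\sigma_j)=W$ is impossible. Second, rounding each of the 17 distributions separately would in general make their tournaments differ, which a C2 rule can detect. Third, twelve of the attacks are tight at $\tau$, so rounding can also break them.

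The paper instead observes that the supported weights satisfying normalization and the common-marginal equations form an affine space $H$ cut out by rational linear equations, so rational points are dense in $H$. For fixed $\lambda<\tau$, positivity of all weights and the strict inequalities $\mathbb E[A_j]>\lambda\,\mathbb E[B_j]$ hold on a relative neighborhood of the algebraic point in $H$. A rational point there, after clearing denominators, gives 17 finite electorates inducing exactly the same tournament. Letting $\lambda\uparrow\tau$ then yields $1+2\tau$.
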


\subsection{Common-tournament certificates}

We first state the certificate principle underlying the construction.  Write
$\mathcal L(C)$ for the set of strict rankings of $C$.  For every possible
output $j\in C$, choose a distribution $p^{(j)}$ over $\mathcal L(C)$, a
candidate $i_j$, and a nonnegative vector $x^{(j)}\in\mathbb R_{\ge0}^C$
with $x^{(j)}_{i_j}=0$.  For a ranking $\pi$, define
\begin{align}
    A_j(\pi)
    &=\min_{k:\,j\succeq_\pi k}x^{(j)}_k,
    &
    B_j(\pi)
    &=\max_{a\succeq_\pi b}
      \bigl(x^{(j)}_a-x^{(j)}_b\bigr).
    \label{eq:lower-ab-statistics}
\end{align}
The weak comparison in the definition of $B_j$ allows $a=b$, so
$B_j(\pi)\ge0$.
We normalize the total voter mass of each distribution to one throughout
this section.  Multiplying all social costs by the size of a finite
electorate does not affect any ratio.

The biased-metric construction of Charikar et
al.~\cite{charikar2025metric} associates a consistent pseudo-metric with
$p^{(j)}$ and $x^{(j)}$ under which $i_j$ is socially optimal and
\begin{align}
    \mathrm{SC}(j)-\mathrm{SC}(i_j)
    &=\mathbb E_{p^{(j)}}[A_j],
    &
    2\mathrm{SC}(i_j)
    &=\mathbb E_{p^{(j)}}[B_j].
    \label{eq:lower-biased-metric-identities}
\end{align}
Consequently, whenever $\mathbb E[B_j]>0$,
\begin{align}
    \frac{\mathrm{SC}(j)}{\mathrm{SC}(i_j)}
    =1+2\frac{\mathbb E[A_j]}{\mathbb E[B_j]}.
    \label{eq:lower-biased-metric-ratio}
\end{align}

\begin{lemma}
\label{lem:common-tournament-lower-certificate}
Suppose that all distributions $p^{(j)}$ induce the same weighted tournament
and that, for some $\lambda\ge1$ and every $j\in C$,
\begin{align}
    \mathbb E_{p^{(j)}}[A_j]
    \ge\lambda\mathbb E_{p^{(j)}}[B_j]>0.
    \label{eq:lower-certificate-inequality}
\end{align}
Then every deterministic C2 rule has distortion at least $1+2\lambda$ on
this candidate set.
\end{lemma}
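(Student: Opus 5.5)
The argument is a standard indistinguishability one. Fix an arbitrary deterministic C2 rule $f$ on the candidate set $C$. Since all distributions $p^{(j)}$ induce the same weighted tournament $W$, the rule outputs one candidate on $W$; call it $j=f(W)$. We then evaluate $f$ on the profile described by $p^{(j)}$. That profile induces $W$, so $f$ still selects $j$. We pair it with the biased pseudo-metric associated with $p^{(j)}$ and $x^{(j)}$, which is consistent with the rankings in the support of $p^{(j)}$ and makes $i_j$ socially optimal.

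By \eqref{eq:lower-biased-metric-identities} we have $2\,\mathrm{SC}(i_j)=\mathbb E_{p^{(j)}}[B_j]>0$, so the minimum social cost is positive and this metric is admissible in the supremum \eqref{eq:distortion}. Then \eqref{eq:lower-biased-metric-ratio} together with \eqref{eq:lower-certificate-inequality} gives
\[
\frac{\mathrm{SC}(j)}{\mathrm{SC}(i_j)}=1+2\frac{\mathbb E[A_j]}{\mathbb E[B_j]}\ge 1+2\lambda .
\]
Because $i_j$ minimizes social cost, this ratio equals the ratio in \eqref{eq:distortion}. Hence $\operatorname{dist}(f)\ge 1+2\lambda$, and since $f$ was arbitrary the lemma follows.

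The main obstacle is a technical point. The distortion is defined over finite voter profiles, whereas the $p^{(j)}$ are probability distributions with possibly irrational masses (the construction uses $\tau$). If the masses are rational, we take a common denominator $N$ and replicate each ranking $Np^{(j)}(\pi)$ times. The induced tournaments then coincide exactly, and the ratios are unchanged because of the normalization remark.

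If the masses are irrational, we cannot keep the tournaments exactly equal while discretizing independently. One option is to read the lemma, as the paper does, with the distribution of voter mass treated as a profile; the distortion is then attained in the limit. A better option is to use the fact that the set of distribution tuples inducing a common tournament is a polyhedron defined over $\mathbb Q$, so its rational points are dense in it. We therefore choose rational tuples $(\tilde p^{(j)})_j$ that still share a common tournament and converge to $(p^{(j)})_j$. The expectations $\mathbb E[A_j]$ and $\mathbb E[B_j]$ are linear in the distribution, so the ratios converge. The distortion of $f$ is thus at least $1+2\lambda-\varepsilon$ for every $\varepsilon>0$, and the supremum gives the bound.
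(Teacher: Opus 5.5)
Your proof is correct and follows the paper's argument: fix the rule's output $j$ on the common tournament $W$, then realize $W$ by $p^{(j)}$ with its biased metric, so that the ratio identity and the certificate inequality give a ratio of at least $1+2\lambda$. Your discretization discussion is not needed for the lemma itself, since the paper works with normalized voter mass in this section. It does, however, closely mirror the paper's separate finite-electorate argument, which uses density of rational points in the rational affine space of common-tournament profile weights.
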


\begin{proof}
Let $W$ be the common weighted tournament and consider an arbitrary
deterministic C2 rule.  On input $W$, the rule returns some fixed candidate
$j$.  The adversary then uses the full-ranking distribution $p^{(j)}$ and its
associated biased metric.  These choices are invisible to the rule because
$p^{(j)}$ still induces $W$.  Equations
\eqref{eq:lower-biased-metric-ratio} and
\eqref{eq:lower-certificate-inequality} give distortion at least
$1+2\lambda$.
\end{proof}

The key point is that the profiles need not agree on correlations among
pairwise comparisons.  They agree only on the weighted tournament observed
by a C2 rule.  The adversary exploits the discarded correlations by choosing
a different realization of the same $W$ after the deterministic output is
fixed.

\subsection{The 17-candidate reset chain}

Identify the candidates with $\mathbb Z_{17}$ and set $i_j=j-1$ modulo 17.
For each output $j$, the metric coordinates are drawn from
$\{0,\frac12,1\}$.  Given an interval length $r_j$, define
\begin{align}
    x^{(j)}_{j-1}&=0,
    &
    x^{(j)}_k&=1
        \quad\text{for }k\in\{j,j+1,\ldots,j+r_j\},
    \label{eq:lower-interval-metric}
\end{align}
with indices taken modulo 17, and set every remaining coordinate to
$\frac12$.  The construction uses
\begin{align}
    (r_0,r_1,\ldots,r_{16})
    =(0,1,15,15,15,15,15,12,15,8,7,6,5,4,3,2,1).
    \label{eq:lower-interval-lengths}
\end{align}
Lengths equal to 15 give the simple biased metric in which every candidate
other than $i_j$ has coordinate one.  The shorter intervals are reset
metrics: they return part of the pairwise mass accumulated by consecutive
simple steps to the intermediate level $\frac12$, allowing the chain to
close without violating pairwise realizability.

For these metrics, the certificate contains one profile for every possible
output.  Its support sizes are
\begin{align}
    (94,92,112,125,124,122,124,105,123,115,107,100,94,93,89,96,95),
    \label{eq:lower-support-sizes}
\end{align}
for a total of 1810 positive-weight rankings.  Every ranking weight belongs
to $\mathbb Q(\tau)$ and has the form
\begin{align}
    p^{(j)}_\pi
    =\frac{c_0+c_1\tau+c_2\tau^2}{Q_j},
    \qquad c_0,c_1,c_2,Q_j\in\mathbb Z.
    \label{eq:lower-algebraic-weight}
\end{align}

Table~\ref{tab:lower-certificate-summary} summarizes the exact attack
checks.  The equality entries satisfy $\mathbb E[A_j]=\tau\mathbb E[B_j]$
as an identity in $\mathbb Q(\tau)$; the other five attacks have strictly 
positive slack.
\begin{table}[t]
    \centering
    \caption{Summary of the 17-candidate lower-bound certificate.  The
    comparator is $i_j=j-1$ modulo 17.}
    \label{tab:lower-certificate-summary}
    \begin{tabular}{c@{\quad}c@{\quad}c@{\quad}c}
        \toprule
        output $j$ & interval $r_j$ & support size &
        $\mathbb E[A_j]/\mathbb E[B_j]$ \\
        \midrule
        $0$  & $0$  & $94$  & $\tau$ \\
        $1$  & $1$  & $92$  & $\tau$ \\
        $2$  & $15$ & $112$ & $\tau$ \\
        $3$  & $15$ & $125$ & $\tau$ \\
        $4$  & $15$ & $124$ & $\tau$ \\
        $5$  & $15$ & $122$ & $\tau$ \\
        $6$  & $15$ & $124$ & $\tau$ \\
        $7$  & $12$ & $105$ & $\tau$ \\
        $8$  & $15$ & $123$ & $1.0914193694\ldots$ \\
        $9$  & $8$  & $115$ & $\tau$ \\
        $10$ & $7$  & $107$ & $1.0914169098\ldots$ \\
        $11$ & $6$  & $100$ & $\tau$ \\
        $12$ & $5$  & $94$  & $\tau$ \\
        $13$ & $4$  & $93$  & $\tau$ \\
        $14$ & $3$  & $89$  & $1.0914658053\ldots$ \\
        $15$ & $2$  & $96$  & $1.0914533585\ldots$ \\
        $16$ & $1$  & $95$  & $1.0920732282\ldots$ \\
        \bottomrule
    \end{tabular}
\end{table}

\subsection{From a numerical face to an exact certificate}

We briefly describe how the algebraic certificate was obtained.  This
derivation is useful for explaining the structure, but the final verification
does not trust any numerical optimization output.  For each $j$, form the
$137$-by-$s_j$ zero-one matrix $M_j$ whose column for ranking $\pi$ is
\begin{align*}
    \bigl(1,(\mathbf 1[a\succ_\pi b])_{0\le a<b\le16}\bigr),
\end{align*}
where $s_j$ is the support size in
\eqref{eq:lower-support-sizes}.  A normalized common weighted tournament is a
vector lying in $\bigcap_{j=0}^{16}\operatorname{col}(M_j)$.  Exact rational
linear algebra shows that this intersection has dimension 13.  Writing it as
$y=Tt$ with $T\in\mathbb Q^{137\times13}$, rational liftings of the
$M_j$ express all supported profile weights as linear functions of $t$.

At $\lambda=\tau$, impose normalization together with the attack equalities
for
\begin{align}
    j\in\{0,1,2,3,4,5,6,7,9,11,12,13\}.
    \label{eq:lower-active-attacks}
\end{align}
The resulting system over $\mathbb Q(\tau)$ has rank 12 and leaves a
one-dimensional affine family.  Choosing the rational free parameter
$s=16407/25000$ gives the weights in
\eqref{eq:lower-algebraic-weight}; all 1810 are strictly positive.  Thus the
endpoint $\tau$ is attained by a genuine interior point of the supported
profile polytope rather than by a limit in which some listed weights become
negative or vanish.

The exact verifier reconstructs the certificate from the ranking data.  It
checks the following conditions without floating-point comparisons:
\begin{enumerate}
    \item every listed order is a permutation and every algebraic weight is
    strictly positive;
    \item the weights in each of the 17 profiles sum exactly to one;
    \item all 17 profiles have identical values for each of the 136
    independent pairwise marginals; and
    \item for every output $j$, the values in
    \eqref{eq:lower-ab-statistics} satisfy
    $\mathbb E[A_j]-\tau\mathbb E[B_j]\ge0$.
\end{enumerate}
Arithmetic is performed in the basis $1,\tau,\tau^2$, with higher powers
reduced using
\begin{align*}
    \tau^3=\frac{-\tau^2+\tau+4}{3}.
\end{align*}
Signs are certified by rational interval arithmetic.  The initial isolating
interval is
\begin{align}
    \frac{1091414260213383}{10^{15}}
    <\tau<
    \frac{1091414260213384}{10^{15}}.
    \label{eq:lower-tau-isolating-interval}
\end{align}
The polynomial in \eqref{eq:lower-tau-polynomial} changes sign across this
interval, and its derivative $9x^2+2x-1$ is positive for $x\ge1$, proving
that the interval contains the unique root greater than one.  Interval
bisection separates every nonzero queried quantity from zero, while exact
coefficient reduction recognizes identities.  The smallest ranking weight
is approximately $1.300226176\times10^{-6}$, so strict positivity is not a
floating-point boundary artifact.

\subsection{Finite electorates and additional candidates}

The algebraic certificate above is most naturally expressed as weighted
distributions over rankings.  We next justify the same lower bound in the
finite equal-weight voter model used in the main text.  Let $H$ be the affine
space of all supported profile weights satisfying normalization and the
common-pairwise equations.  These are rational linear equations, so rational
points are dense in $H$.  The algebraic certificate is a point of $H$ at
which every ranking weight is strictly positive.

Fix any $\lambda<\tau$.  At the algebraic point,
\begin{align*}
    \mathbb E[A_j]-\lambda\mathbb E[B_j]
    =\bigl(\mathbb E[A_j]-\tau\mathbb E[B_j]\bigr)
      +(\tau-\lambda)\mathbb E[B_j]>0
\end{align*}
for every $j$.  Positivity of the weights and these strict inequalities are
preserved in a sufficiently small relative neighborhood in $H$.  This
neighborhood contains a rational point.  Clearing a common denominator then
turns all 17 rational distributions into finite electorates of the same
size, still inducing exactly the same weighted tournament and satisfying
the attack inequalities at $\lambda$.  Lemma
\ref{lem:common-tournament-lower-certificate} gives a finite-profile lower
bound greater than $1+2\lambda$.  Taking the supremum as
$\lambda\uparrow\tau$ proves Proposition~\ref{thm:exact-c2-lower-bound} for
$m=17$.

It remains to extend the result to larger candidate sets.  Given a
certificate on $m$ candidates, choose a candidate $c$ and add a clone $c'$.
In every supported ranking, replace $c$ by the adjacent block $(c,c')$ or
$(c',c)$, each with half the original weight.  Old pairwise marginals are
unchanged, $c$ and $c'$ have identical marginals against every old candidate,
and $w_{cc'}=1/2$.  Extend every metric by setting $x_{c'}=x_c$.  Because the
clones are adjacent and have equal coordinates, neither $A_j$ nor $B_j$
changes for any old output.  To attack output $c'$, exchange the labels of
$c$ and $c'$ in the attack for $c$; the cloned weighted tournament is
invariant under this exchange.  Thus the same parameter extends from $m$ to
$m+1$.  Repeating the construction proves
\eqref{eq:exact-c2-lower-bound} for every $m\ge17$.

\end{document}